\newcommand\myfigure[1]{
\medskip\noindent\begin{minipage}{\columnwidth}
\centering
\end{minipage}\medskip}
\newsavebox{\measurebox}
\theoremstyle{plain}
\newtheorem{theorem}{Theorem}[section]
\newtheorem{lemma}[theorem]{Lemma}
\newtheorem{question}[theorem]{Question}
\theoremstyle{definition}
\date{}
\newcommand{\teo}[1]{{\color{black} #1}}
\begin{document}

\begin{frontmatter}

\title{\textbf{Optimizing enzymatic catalysts for rapid turnover of substrates with low enzyme sequestration}}

\author[1]{Abhishek Deshpande}
\ead{deshpande8@wisc.edu}
\author[2]{Thomas E. Ouldridge\corref{cor2}}
\ead{t.ouldridge.imperial.ac.uk}

\cortext[cor2]{Corresponding author}
\address[1]{Department of Mathematics, University of Wisconin Madison, Madison, WI 53706, United States of America}
\address[2]{Imperial College Centre for Synthetic Biology and Department of Bioengineering, Imperial College London, London, SW7 2AZ, United Kingdom}

\begin{abstract}
Enzymes are central to both metabolism and information processing in cells.  In both cases, an enzyme's ability to accelerate a reaction without being consumed in the reaction is crucial.  Nevertheless, enzymes are transiently sequestered when they bind to their substrates; this sequestration limits activity and potentially compromises information processing and signal transduction. In this article we analyse the mechanism of enzyme-substrate catalysis from the perspective of minimizing the load on the enzymes through sequestration, whilst maintaining at least a minimum reaction flux. In particular, we ask: which binding free energies of the enzyme-substrate and enzyme-product reaction intermediates minimize the fraction of enzymes sequestered in complexes, while sustaining a certain minimal flux? Under reasonable biophysical assumptions, we find that the optimal design will saturate the bound on the minimal flux, and reflects a basic trade-off in catalytic operation. If both binding free energies are too high, there is low sequestration, but the effective progress of the reaction is hampered. If both binding free energies are too low, there is high sequestration, and the reaction flux may also be suppressed in extreme cases. The optimal binding free energies are therefore neither too high nor too low, but in fact moderate. Moreover, the optimal difference in substrate and product binding free energies, which contributes to the thermodynamic driving force of the reaction, is in general strongly constrained by the intrinsic free-energy difference between products and reactants. Both the strategies of using  a negative binding free-energy difference to drive the catalyst-bound reaction forward, and of using a positive binding free-energy difference to enhance detachment of the product, are limited in their efficacy. 
\end{abstract}

\end{frontmatter}

\section{Introduction}
    
Enzymatic catalysts are ubiquitous in biology, forming crucial parts of the networks that implement metabolism~\cite{berg2002biochemistry}, signalling~\cite{hunter1995protein,tsai2009protein}, and the central dogma of molecular biology~\cite{crick1970central}. Analysing the mechanism by which they function is fundamental to understanding the exquisite behaviour of natural networks, to engineering existing systems~\cite{buckhout2018restriction,erb2017synthetic}, and to developing synthetic analogs {\it de novo}~\cite{rothlisberger2008kemp,jiang2008novo,siegel2010computational}. 

\teo{A catalytic enzyme enhances the overall rate of a molecular process by participating in reaction intermediates. The enzyme, however, is recovered unscathed at the end of the process. This mechanism of action allows a single enzyme to turn over a large number of metabolites, but it has equally fundamental consequences for signalling and information processing systems.

Catalytic action allows an enzyme to modify its substrates in a {\em persistent} way, so that the products do not immediately convert back once they detach from the enzyme -- unlike in simpler mechanisms of direct allosteric action \cite{Ulrich2005}. The importance of this persistence is exemplified by kinase signalling networks, which are central to signal transduction and information processing in eukaryotes \cite{Manning2002,Herskowitz1995}. In these systems, kinase enzymes catalyse phosphorylation of specific amino acid residues within protein substrates. The phosphorylated products demonstrate a change in activity, becoming, for example, activated transcription factors or kinases that in turn activate further downstream species. Crucially, the activated products do not need to remain bound to their upstream enzymes to be functional - their activation persists beyond the timescale of the catalytic interaction.

In this way, catalytic mechanisms allow the formation of (meta)stable memories in the states of downstream molecules, which retain information on their prior interactions \cite{Ouldridge_comput_2017}. These memories can suppress sensing noise through time-integration of signals \cite{govern2014}; underlie the calculation of time derivatives of an input in the context of adaptive chemotaxis \cite{Armitage1999}; and are central to the measurement and feedback cycles of recently-proposed molecular integral feedback controllers \cite{Briat2016,Encisco2019,Cappelletti2019}}.
Furthermore, since a single catalyst can activate multiple downstream substrates, catalyst motifs can be used to split and amplify signals~\cite{Mehta2016}.

Although catalysts are not consumed by reactions, they must transiently participate in the intermediate complexes. In so doing they face a central paradox: catalysts must bind strongly enough to participate in the reaction, but weakly enough to be recovered at the end. Moreover, whilst participating in a reaction, enzymatic catalysts are generally sequestered by their binding partners and cannot act on additional substrates. In some cases this sequestration may allow novel, advantageous behaviour. For example, sequestration is necessary for the mechanism of ``zero-order ultrasensitivity" that allows for sharp responses of the output substrate to small changes in concentration of the input catalyst~\cite{goldbeter1981amplified,huang1996ultrasensitivity,ferrell2014ultrasensitivity}. Frequently, however, sequestration is a potential disadvantage~\cite{jayanthi2011retroactivity}: it can limit the maximal rate of substrate turnover, and can cause ``retroactive" loading effects in signalling networks that lead to breakdowns in assumptions of modularity~\cite{tuanase2006signal,ventura2008hidden,del2008modular,barton2013energy,del2015biomolecular,deshpande2017high}. The latter is a particular concern in the rational design of synthetic systems~\cite{meijer2017hierarchical}.

The simple Michaelis-Menten model of enzymatic kinetics illustrates the problem of sequestration for substrate turnover~\cite{henri2006theorie,menten1913kinetik}. Consider the following reaction scheme:
\begin{eqnarray}
{\rm E} + {\rm S} \xrightleftharpoons[k_{-0}]{k_{+0}} {\rm ES} \xrightarrow[]{k_{+ \rm cat}} {\rm E} + {\rm P}.
\label{eq:MM}
\end{eqnarray}
Here E is the catalytic enzyme, S is the substrate, ES is the enzyme-substrate complex, and P is the product. The bimolecular rate constant $k_{+0}$ describes the speed with which substrates and enzymes bind; the unimolecular rate constants $k_{-0}$ and $k_{+ \rm cat}$ describe the rates of unbinding and product generation and release, respectively. Assuming the concentration of the ES complex, $[ES]$, reaches a quasi-steady state \teo{rapidly relative to any depletion of the substrate}~\cite{briggs1925note,schnell2014validity}, one obtains $k_{+0}[E][S] =(k_{-0}+k_{\rm +cat})[ES]$. The conservation law for the total enzyme concentration gives $[E]+[ES]=[E_{\rm tot}]$, implying that the concentration of enzymes in complexes with the substrate  is $[ES]=\frac{k_{+0}[E_{\rm tot}][S]}{k_{+0}[S]+ k_{-0}+k_{\rm +cat}}$. Therefore, the overall flux of substrates through the reaction is $r=k_{\rm +cat}[ES]=k_{\rm +cat}\frac{k_{+0}[E_{\rm tot}][S]}{k_{+0}[S]+k_{-0}+k_{\rm +cat}}=r_{\rm max}\frac{[S]}{[S] + K_M}$, where $r_{\rm max}=k_{\rm +cat}[E_{\rm tot}]$ is the {limiting} possible rate of the reaction and $K_M = k_{-0}/(k_{+0} +k_{\rm +cat})$. The quantity $\frac{[S]}{[S] + K_M} \leq 1$ reflects the effect of sequestration {in quasi-steady state}. {It is also known as the efficiency of the enzyme catalyzed reaction~\cite{schnell1997enzymological,schnell1997theoretical}}. At low substrate concentration $[S]\rightarrow 0$, the reaction flux is proportional to $[S]$; plenty of enzymes are available to process additional substrates at the same rate per substrate. For $[S] \gtrsim K_M$, however, the reaction flux plateaus because a substantial fraction of the enzymes become sequestered, and fewer are available to process additional substrates. 

\teo{In this article, we analyse how properties of an enzyme might be tuned, either by evolution or bioengineers, to achieve the goal of minimising sequestration while maintaining a certain reaction flux. As shown by the pioneering work of Terrell Hill~\cite{hill1966studies,hill1983some}, and emphasised in the more recent works of Beard, Qian and coworkers \cite{qian2003stoichiometric,beard2004thermodynamic,qian2005thermodynamics}, catalysts operate out of equilibrium and  understanding the fundamental constraints on their behaviour requires a thermodynamic perspective. Numerous authors have considered the implications of thermodynamics for catalytic function in various contexts, from a generic relationship between free energy and one-way reaction fluxes~\cite{beard2007relationship} to inherent trade-offs between thermodynamic cost and performance in sensing and signalling~\cite{govern2014,Mehta2016} and insulating motifs that are designed to suppress retroactive effects~\cite{deshpande2017high,barton2013energy}. 
 In our case, we will assume a fixed overall thermodynamic drive to a catalytic reaction, and instead explore how the thermodynamic stability of reaction intermediates determine whether a catalyst can achieve high substrate turnover rates at low levels of sequestration.}

The manuscript is structured as follows: in section~\ref{sec:modelling}, we justify the use of a specific thermodynamically self-consistent model of enzyme-substrate catalysis~\cite{chaplin1990enzyme,beard2008chemical,voet2011biochemistry}.
We then formulate our question as an optimization problem. We ask: how do we choose binding free energies  of the intermediate enzyme-substrate and enzyme-product complexes so that the system minimises the number of enzyme-substrate complexes whilst maintaining a required minimum flux of reactants into products? In Section~\ref{sec:results}, we analyse this optimization problem under the assumption that association reactions are diffusion limited, finding that there is an inherent trade-off in such motifs. Choosing very high binding free energies for the intermediate complexes reduces retroactivity, but also reduces the flux through the circuit. On the other hand, choosing very low binding energies implies that the system spends a large proportion of time in the intermediate states, increasing the retroactivity of the system. We show that the optimal binding free energies are not only moderate as a consequence of this trade-off, but they are strongly related to each other. In particular, the difference between the optimal binding free energies is a constant that is related to the intrinsic free energy difference between the products and reactants. In addition, we also show that the optimal circuit saturates the bound on the flux requirement. In Section~\ref{sec:non_diffusion}, we relax the assumption that binding rates are diffusion limited. We find that many of our observations from the diffusion-limited regime carry over qualitatively to this new regime. The optimal binding free energies are still moderate and the difference between them is confined to a value close to the intrinsic free energy difference between the products and reactants.

\section{Model and methods}\label{sec:modelling}

\begin{figure}[htbp]
\centering
\includegraphics[scale=0.25]{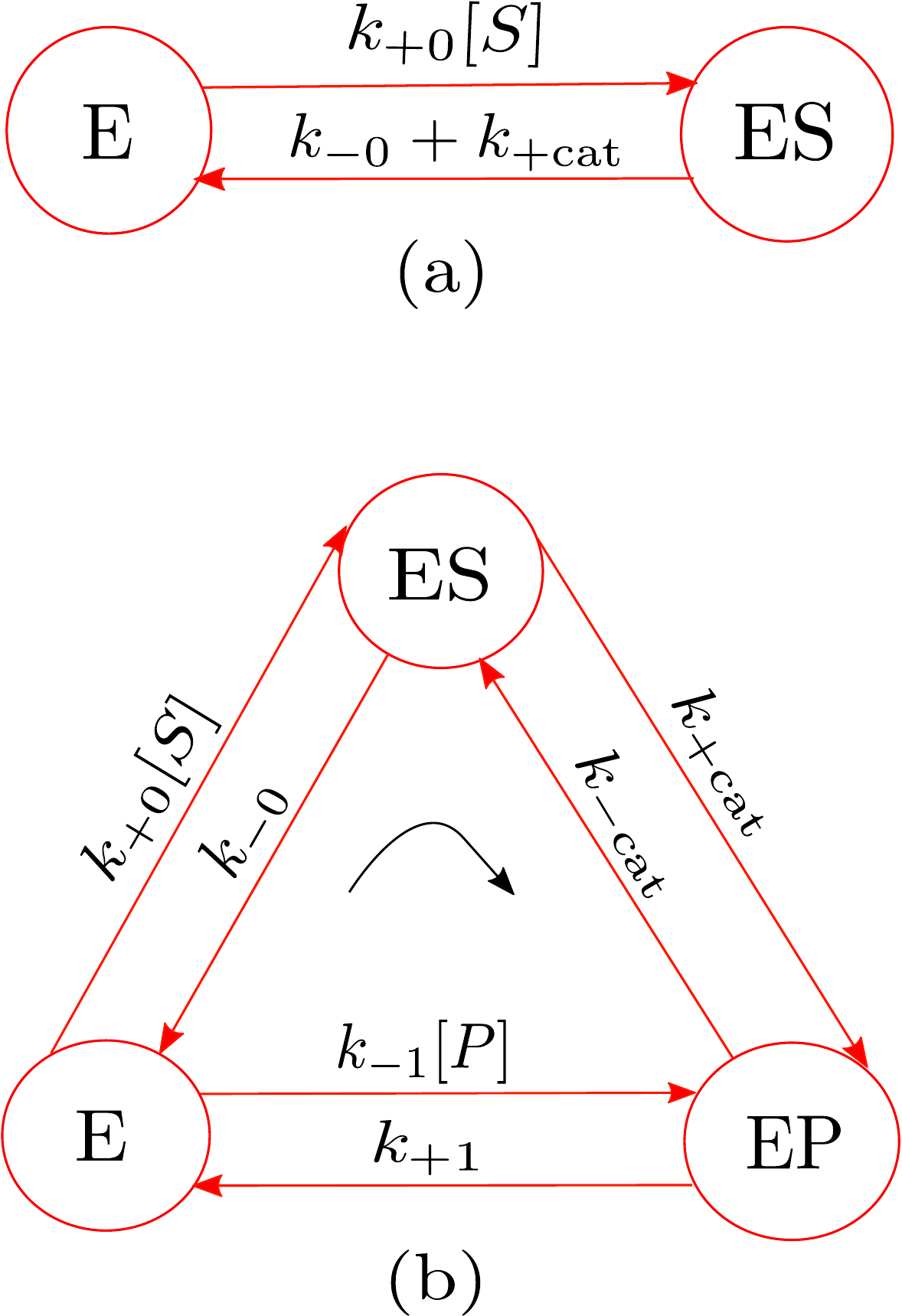}
\caption{Markov chains for the evolution of an isolated enzyme's binding state in minimal models of catalysis. (a) Markov model  for the catalytic mechanism ${\rm E} + {\rm S} \xrightleftharpoons[k_{-0}]{k_{+0}} {\rm ES} \xrightarrow[]{k_{+ \rm cat}} {\rm E} + {\rm P}$. The single $\rm ES \rightarrow E$ transition in the stochastic process includes two physically distinct processes; $\rm ES \rightarrow E + S$ and $\rm ES \rightarrow E+P$. (b) Markov chain corresponding to the enzyme-substrate catalysis given by Equation~\ref{eq:actual_mechanism}: ${\rm E}+{\rm S}\xrightleftharpoons[k_{-0}]{k_{+0}}{\rm ES}\xrightleftharpoons[k_{-\rm{cat}}]{k_{+\rm{cat}}}{\rm EP}\xrightleftharpoons[k_{-1}]{k_{+1}} {\rm E}+ {\rm P}$. The completion of single clockwise cycle converts a substrate molecule into a product.}
\label{fig:schematic}
\end{figure}

We now introduce basic modelling assumptions, in the process explaining why the classic Michaelis-Menten model is insufficient for our purposes. Henceforth, we will use natural units in which $k_{\rm B}T=1$, all rates are defined dimensionlessly relative to $1$\,s$^{-1}$, and all concentrations given dimensionlessly relative to 1M. We model dilute biochemical systems at the level of molecular macrostates~\cite{ouldridge2018importance}; reactions are described by mass-action kinetics with well-defined rate constants. We consider an ensemble of enzymes interacting with substrate S and product P molecules; \teo{these substrates and products are assumed to have approximately constant concentrations $[S]$ and $[P]$ on the timescale of interest}~\cite{ge2013dissipation}. In this limit, the trajectory of a single enzyme through its discrete binding states can be analysed independently as a continuous time Markov chain, with pseudo-first-order transition rates that depend on $[S]$ and $[P]$~\cite{wachtel2018thermodynamically}. The resultant probabilities are proportional to the expected concentrations of enzymatic states in a bulk system. The systems we consider form irreducible Markov chains, and therefore tend toward a well-defined steady-state probability distribution $\pi_i$ describing the occupancy of enzyme binding states $i$. 

We will first illustrate our approach with the commonly-used model of Eq.~\ref{eq:MM}, before arguing that it is insufficiently rigorous to allow a meaningful optimisation. We will then present the extended model that will form the basis of this work. The model of Eq.~\ref{eq:MM} has a Markov chain representation shown in Fig.~\ref{fig:schematic}\,(a). The binding states of the enzyme are unbound (E) and substrate-bound (ES). Both the release of the product and unbinding of the substrate contribute to the same transition (ES to E) at the level of the enzyme's binding states. The probability of the enzyme being unbound (equal to the fraction of unbound enzymes in an ensemble) is $\pi_{E}$, and the net rate of product output per enzyme is $k_{+ \rm cat} \pi_{ ES}$. 

We consider the challenge of optimizing the enzyme properties to achieve a desired steady state rate of conversion of S into P  per enzyme, $\Psi$, with a minimal steady-state fraction of sequestered enzymes, $\mathcal{R}$, at fixed concentrations $[S]$ and $[P]$. For the model in Fig.~\ref{fig:schematic}\,(a), these quantities are given by $\Psi=k_{+ \rm cat} \pi_{ ES}$ and  $\mathcal{R}=1 - \pi_{E}$, respectively. Furthermore, since $\pi_{ E} = \frac{k_{\rm +cat}+k_{-0}}{k_{+0}[S]+k_{\rm +cat}+k_{-0}}$ and $\pi_{ ES}=\frac{k_{+0}[S]}{k_{+0}[S]+k_{\rm +cat}+k_{-0}}$ , it is immediately clear that the sequestration fraction $\mathcal{R}$ can be made arbitrarily small, without compromising the flux $\Psi$, by allowing the catalytic rate $k_{\rm +cat}\rightarrow\infty$. To obtain meaningful insight, it is therefore necessary to consider physically-motivated constraints on the kinetic parameters. 

The most important constraint is that since the enzyme E is not consumed in the reaction, its properties cannot influence the overall free-energy change of reaction, $\Delta G = \ln \frac{[P]}{[S]} - \Delta \mu$. Here, $\Delta \mu$ is the intrinsic free-energy difference between S and P; by our sign convention, a positive $\Delta \mu$ implies P is more thermodynamically stable than S, providing a forward drive to the reaction. Note that $\Delta \mu$ can also incorporate the contribution to $\Delta G$ of the consumption of ancillary fuel molecules, such as ATP, which are treated implicitly in the model of Eq.~\ref{eq:MM}. If the environment of substrates and products is fixed, an optimization over enzyme properties corresponds to optimizing at fixed $\Delta G$. 

For a single reaction step $j$, the principle of detailed balance dictates that the free energy change is directly related to the forwards and backwards transition rates $\nu_{\pm j}$~\cite{ouldridge2018importance,frenkel2001understanding}:
\begin{eqnarray}
\Delta G_j = - \ln  \frac{ \nu_j}{\nu_{-j}}.
\label{eq:db}
\end{eqnarray}
For a multi-step reaction, one can simply add together Eq.~\ref{eq:db} for each step $j$, obtaining
\begin{eqnarray}
\Delta G = - \ln \prod_j \frac{ \nu_j}{\nu_{-j}}.
\end{eqnarray}
 Since the catalytic step of the Michaelis-Menten model has no reverse complement, $\Delta G$ is undefined, making it impossible to perform an optimization at  fixed $\Delta G$. It is therefore necessary to introduce a backwards transition, which would allow E and P to bind, and be converted into ES~\cite{fisher1999force,astumian2015irrelevance,haldane1930course,sauro2011enzyme}. 

Having included this reaction, it is hard to justify combining both the chemical conversion of substrate into product, and its release from the enzyme, in a single step, as in Eq.~\ref{eq:MM}. If both P and S can be converted into each other by E, shouldn't the binding and unbinding of P also be treated explicitly using a binding state EP? Indeed, ignoring EP corresponds to assuming that the enzyme-product complex is arbitrarily short-lived, yet does not present a barrier to the conversion of S into P. This assumption seems to ignore the very challenge of the optimization problem itself. {We therefore use the following model~\cite{chaplin1990enzyme,beard2008chemical,voet2011biochemistry}},
\begin{align}\label{eq:actual_mechanism}
{\rm E}+{\rm S}\xrightleftharpoons[k_{-0}]{k_{+0}}{\rm ES}\xrightleftharpoons[k_{-\rm{cat}}]{k_{+\rm{cat}}}{\rm EP}\xrightleftharpoons[k_{-1}]{k_{+1}} {\rm E}+ {\rm P},
\end{align}
as the minimal description of catalysis in which we can meaningfully ask how enzyme properties can be adjusted to minimize sequestration $\mathcal{R}$ at fixed flux per enzyme ${\Psi}$. This molecular model can be represented as a continuous time Markov process over the enzymatic states E, ES and EP as shown in Fig.~\ref{fig:schematic}\,(b); within this description, we obtain
\begin{align}
\mathcal{R} = 1 - \pi_{E} = \pi_{ ES} + \pi_{EP},
\label{eq:retroactivity}
\end{align}
and
\begin{align}
{\Psi} =  k_{+0}[S] \pi_{ E} - k_{-0} \pi_{ES}.
\end{align}
The requirement of fixed $\Delta G$ equates to 
\begin{align}
\frac{k_{+0}[S] k_{\rm + cat} k_{+1}}{k_{-0} k_{\rm - cat} k_{-1}[P]} = \exp(-\Delta G) = {\rm const.}
\label{eq:constraint}
\end{align}

Even with the restriction to fixed $\Delta G$, and fixed concentrations $[S]$ and $[P]$, the optimization problem is still poorly constrained. As it stands, all rate constants in Eq.~\ref{eq:constraint} could be increased by an arbitrary factor, allowing an arbitrarily high $\Psi$ whilst leaving the stationary distribution $\pi$ (and hence $\mathcal{R}$) unchanged. It would therefore be possible to obtain any flux $\Psi$ whilst ensuring $\mathcal{R} \rightarrow 0$. In practice, \teo{molecular processes cannot be arbitrarily fast due to basic constraints from physical chemistry. In any case, it is far easier to tune, whether by rational engineering or evolution, the binding free energies of the metastable substrate-bound and product-bound states than it is to further optimise the precise chemistry of the ensemble of transition states between the two.} To reflect this fact, we restrict ourselves to optimizing the two standard binding free energies $\Delta G_{ES}$ and $\Delta G_{EP}$, which are related to the rate constants by 
\begin{align}
\exp(-\Delta G_{ES}/k_{\rm B}T)= \frac{k_{+0}}{k_{-0}}, & \nonumber \\
\exp(-\Delta G_{EP}/k_{\rm B}T)= \frac{k_{-1}}{k_{+1}}, &  \nonumber \\
\exp \left((-\Delta G_{EP}+\Delta G_{ES}+ \Delta \mu)/k_{\rm B}T \right) = \frac{k_{\rm + cat}}{k_{\rm -cat}}. \label{eq:rates}&
\end{align}
A schematic representation of the the free energy profile of this process, and its dependence on the parameters in Eq.~\ref{eq:rates}, is given in Fig.~\ref{fig:reaction_coordinate}.

\begin{figure}[htbp]
\centering
\includegraphics[scale=0.16]{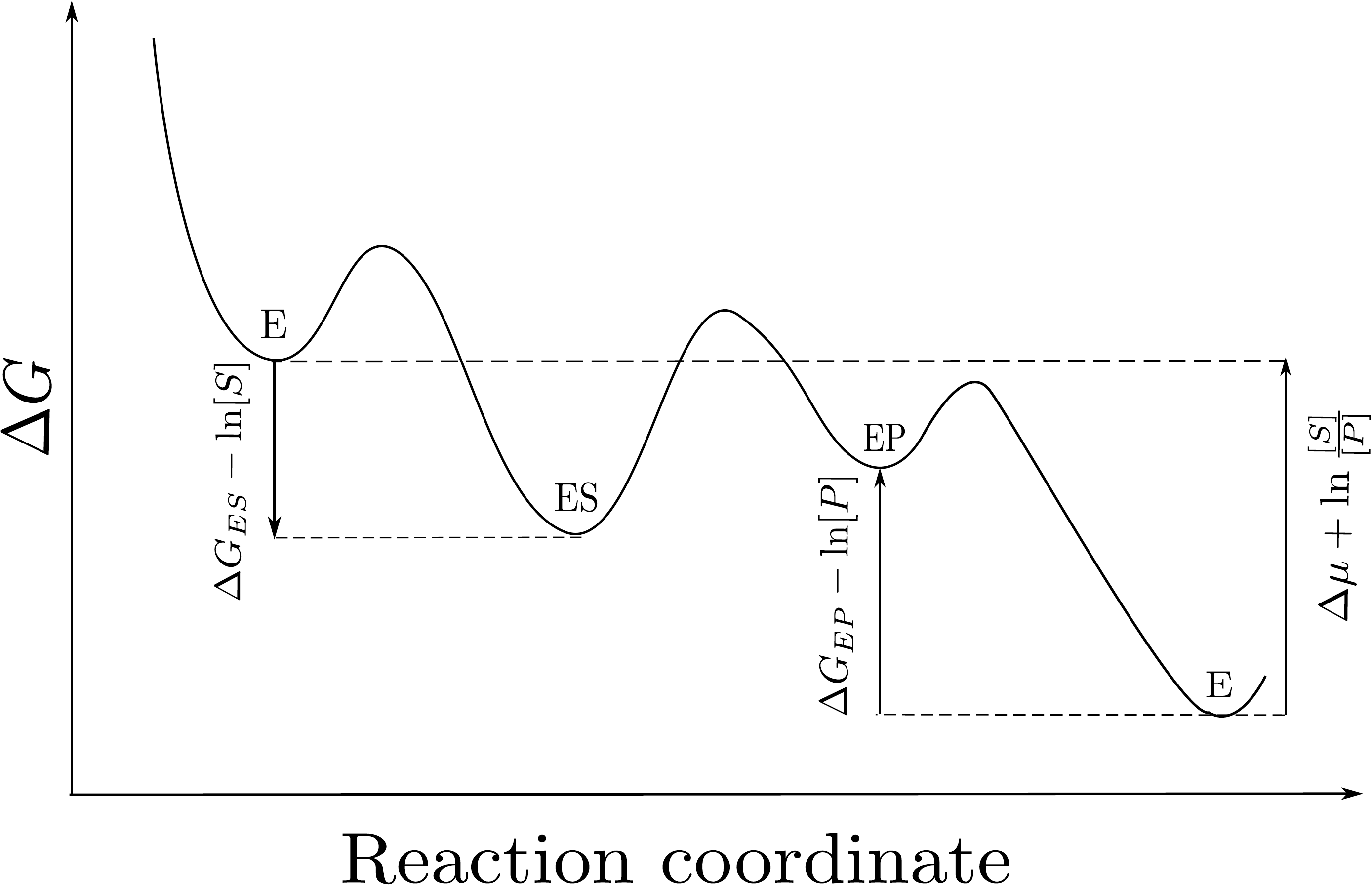}
\captionof{figure}{Illustrative free-energy profile of the catalytic conversion of S into P by E, with the number of S and P molecules present explicitly accounted for. Chemical macrostates are shown as local minima in the profile, separated by barriers. A full catalytic cycle corresponds to moving from the leftmost E minimum to the rightmost, consuming S and producing P. Although the enzyme itself returns to its original state, the process as a whole is thermodynamically downhill ($\Delta G = \ln[P]/[S] - \Delta \mu$) because the final state has one fewer substrate and one more product molecule. Optimization corresponds to adjusting the heights of the metastable ES and EP bound states to minimise sequestration whilst maintaining a fixed flux.}
\label{fig:reaction_coordinate}
\end{figure}

A given set of binding free energies, along with $\Delta \mu$, therefore specifies ratios of forwards/backwards rate constants. Absolute values remain ambiguous. To make progress, we must make further assumptions about how rate constants respond to changes in the binding free energies. A similar issue was addressed in Ref.~\cite{brown2017allocating}, in which the authors attempted to optimize the flux of trajectories through a series of states by adjusting their energies (in that work, unlike this one, there was no attempt to minimise occupancy of intermediate states).  In that case, one of the two transitions in a backwards/forwards pair was assumed to be exponentially sensitive (``labile") to the energy difference, whereas the other was assumed to be constant. 

In our system, we will initially assume (for simplicity) that the binding reactions are diffusion-controlled~\cite{atkins2011physical,alberty1958application}; i.e. the on-rate is fixed by the diffusion time scales that are independent of the details of the enzyme's interaction wih substrate and product. In particular, we set $k_{+0}=k_0 = {\rm const}$ and $k_{-1}=k_1 = {\rm const}$. In the language of~\cite{brown2017allocating}, the substrate binding transition is ``backwards labile" and the product release transition is ``forwards labile". There is no immediately obvious reason to make the intermediate step of chemical catalysis either forward or backward labile. We show in Section~\ref{sec:results}, however, that both choices give pathological results for the question we ask. Invoking the fact that a true chemical modification cannot happen arbitrarily fast, we then consider an alternative in which both the forward and backward catalytic rate constants have a finite upper bound:
\begin{align}
k_{+\rm{cat}}=k_{\text{cat}}\min(1,e^{- \Delta G_{EP}+\Delta G_{ES} + \Delta\mu + \Delta G_c}),
\label{eq:k+cat}
\end{align}
and
\begin{align}
k_{-\rm{cat}}=k_{\text{cat}}e^{\Delta G_c}\min(1,e^{-\Delta G_{ES}+ \Delta G_{EP} - \Delta\mu -\Delta G_c}),
\label{eq:k-cat}
\end{align}
for some $\Delta G_C\in\mathbb{R}$. The scheme is based on ``Metropolis dynamics", in which reactions that are downhill in free energy have a fixed rate and uphill reactions are slowed down \cite{srinivas2013biophysics}. This dynamics is equivalent to assuming that, in an Arrhenius picture, the transition state is a fixed free energy above the maximum free energy of the two metastable states on either side. The inclusion of a finite $\Delta G_C$ generalises this approach to allow an offset between the maximum forward and backward catalytic rate constants. The overall effect is to split the $\Delta G_{ES}-\Delta G_{EP}$ plane into two regions: region $I$ in which the interconversion is backwards labile and the backwards step is slow; and region $II$ in which the interconversion is forwards labile while the forwards reaction is slow. A graphical illustration of these different responses to the free energies of transition is given in Fig.~\ref{fig:rate_constant_energy}. We are now able to fully state our main optimization problem:

\begin{question}\label{pr:original}
\textit{How should binding free energies $\Delta G_{ES}$ and $\Delta G_{EP}$ in the model of Eq.~\ref{eq:actual_mechanism} be chosen to minimize sequestration $\mathcal{R}$, whilst maintaining a product output rate of $\Psi_0$, given fixed $[S]$, $[P]$ and $\Delta \mu$, diffusion-limited binding reactions, and catalytic steps with fixed and finite upper bounds on their rates (Eqs.~\ref{eq:k+cat} and \ref{eq:k-cat})?} 
\end{question}

\begin{figure}[htbp]
\centering
\includegraphics[scale=0.37]{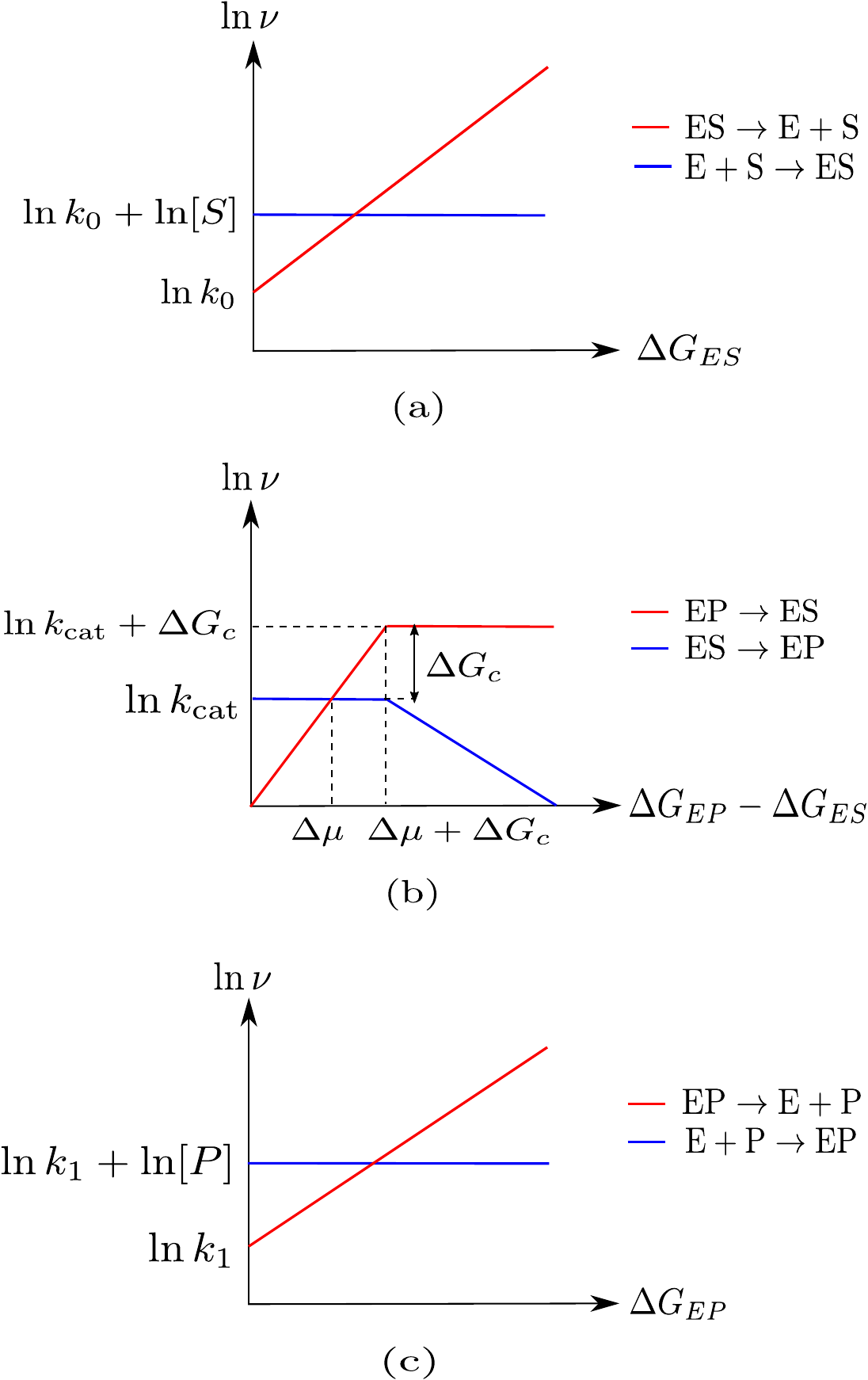}
\captionof{figure}{Graphical illustration of the dependence of the transition rates $\nu$ of the Markov chain in Figure~\ref{fig:schematic}\,(b) on binding free energies. Binding rate constants are assumed to be fixed due to diffusion. Catalytic rate constants are bounded, whilst maintaining $\frac{k_{+\rm{cat}}}{k_{-\rm{cat}}}=e^{-\Delta G_{EP} + \Delta G_{ES} + \Delta\mu}$. In this case, the $\Delta G_{c}$ parameter fixes an offset between the maximal rates.}\label{fig:rate_constant_energy}
\end{figure}

\section{Results and Discussion}\label{sec:results}

\subsection{Diffusion-controlled binding rates}
\label{sec:diff-controlled}
To answer Problem~\ref{pr:original}, we seek the optimal binding free energies  $\Delta G^{\rm opt}_{ES}$ and $\Delta G^{\rm opt}_{EP}$ that achieve a minimal sequestration $\mathcal{R}^{\rm opt}$, whilst maintaining an output flux $\Psi^{\rm opt} \geq \Psi_0$. Here $\Psi_0\in\mathbb{R}_{>0}$ is the target seady-state net rate of substrate turnover. We will first present a mathematical analysis, followed by a physical explanation and interpretation.  

\subsubsection{Detailed analysis of diffusion-controlled system}
\label{sec:analysis1}

We describe a target flux $\Psi_0$ as {\em achievable} if choices of $\Delta G_{ES}$ and $\Delta G_{EP}$ exist that satisfy $\Psi \geq \Psi_0$.   We will show that solutions to Problem~\ref{pr:original} for achievable target fluxes $\Psi_0$  lie on the line $\Delta G^{\rm opt}_{EP} = \Delta G^{\rm opt}_{ES}+ \Delta\mu + \Delta G_c$, with $\Delta G^{\rm opt}_{ES}$ and $\Delta G^{\rm opt}_{EP}$ taken as high as possible to saturate the flux constraint, $\Psi^{\rm opt}=\Psi_0$. To make these arguments, we require machinery from the theory of Markov chains. To guide this derivation, we illustrate the continuous time Markov chain corresponding to Eq.~\ref{eq:actual_mechanism}, with its explicit dependence on the parameters $\Delta G_{ES}$, $\Delta G_{EP}$, $\Delta \mu$ and $\Delta G_c$ as laid out in Eq.~\ref{eq:rates}, in Fig.~\ref{fig:diffusion}. We first solve for $\mathcal{R}$ and $\Psi$ in terms of basic properties of this Markov chain.

\begin{figure}[htbp]
\centering
\includegraphics[scale=0.4]{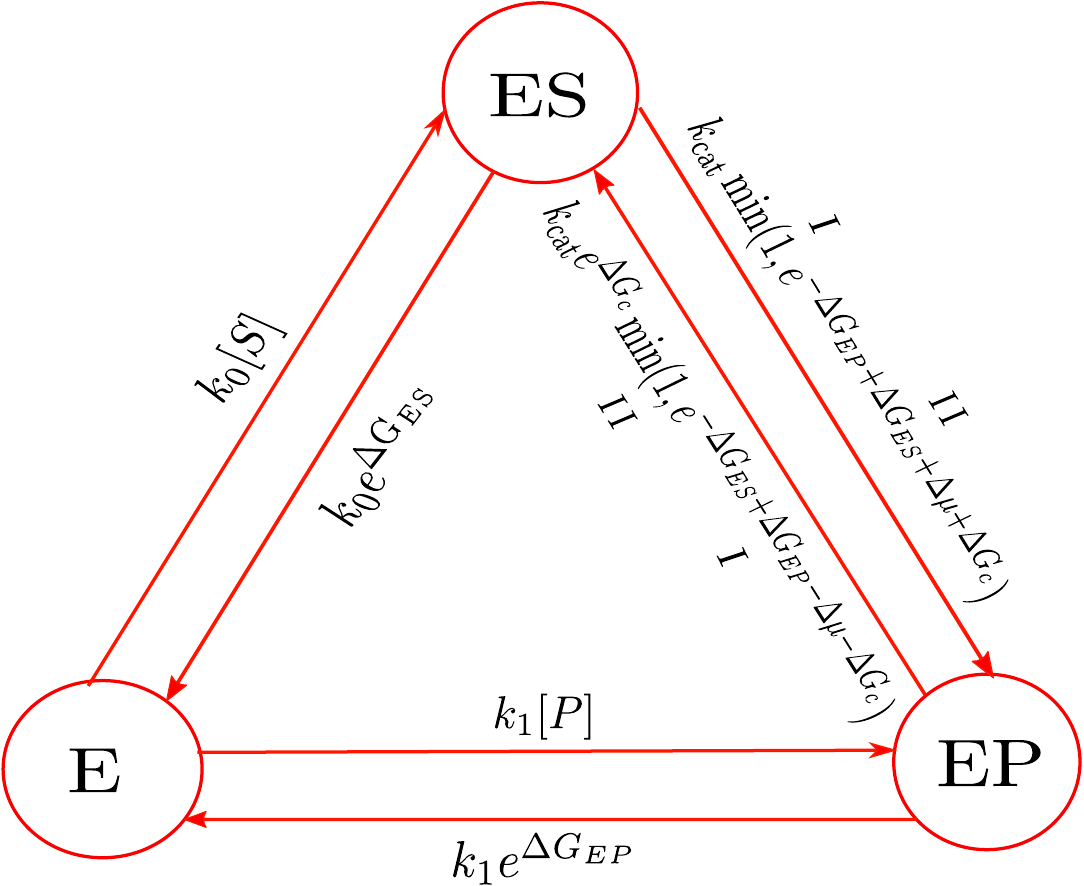}
\captionof{figure}{Markov chain for Problem~\ref{pr:original}, showing the explicit dependence of rate and the optimisation variables 
$\Delta G_{ES}$ and $\Delta G_{EP}$, along with other parameters that are constant during the optimisation of a given system. The options inside the minimization statement of the catalytic rate constants have been labelled corresponding to appropriate regions ($I$ or $II$) in the $\Delta G_{ES}-\Delta G_{EP}$ plane.}\label{fig:diffusion}
\end{figure}

Let us denote the expected lifetime of state $i$ by $\tau_i$; and the expected time for the next arrival at $j$ given that the current state is $i$ (the mean first passage time) by $\tau_{i \rightarrow j}$. The series of states visited by the enzyme form an embedded discrete-state, discrete-time Markov chain~\cite{norris1998markov}; let $P_{i \rightarrow j}$ represent the transition probabilities of this Markov chain. $P_{i \rightarrow j}$ is then the probability that $j$ is the next state visited given that the system is in state $i$. By the memoryless property of continuous time Markov chains, the lifetime of a state is exponentially distributed with parameter equal to the total rate of outward transition from that state, implying that $\tau_i=\frac{1}{\sum_j K_{ji}}$, where $K_{ji}$ is the rate of outward transition from state $i$ to $j$.

Since we have an irreducible Markov chain, using \cite[Theorem~3.8.1]{norris1998markov} and Equation~\ref{eq:retroactivity}, we obtain 
\begin{align}
\mathcal{R} = 1- \pi_{E} = 1 - \frac{\tau_{ E}}{\tau_{E\rightarrow E}}.
\end{align}
Using $\tau_{E \rightarrow E} = \tau_E + P_{ E \rightarrow ES} \tau_{ES\rightarrow E} + P_{E\rightarrow EP} \tau_{EP \rightarrow E}$, and noting that $P_{E\rightarrow ES}=k_{0}[S]\tau_E$ and $P_{E\rightarrow EP}=k_{1}[P]\tau_E$, we find
\begin{eqnarray}\label{eq:retroactivity_intermitent}
\mathcal{R} = 1 - \frac{1}{1 + k_{0}[S]\tau_{ES\rightarrow E} + k_{1}[P]\tau_{EP \rightarrow E}}.
\end{eqnarray}

Observing that 
\begin{align}
\tau_{ES\rightarrow E}=  \tau_{ES} + P_{ES \rightarrow EP} \tau_{EP\rightarrow E}, &\nonumber \\
\tau_{EP\rightarrow E}=  \tau_{EP} + P_{EP \rightarrow ES} \tau_{ES\rightarrow E}, &
\end{align}
 we can solve for the average first passage times as
\begin{align}
\tau_{ES\rightarrow E}=  \frac{\tau_{ES} + P_{ES \rightarrow EP} \tau_{EP}}{1-P_{EP \rightarrow ES} P_{ES \rightarrow EP} }, &\nonumber \\
\tau_{EP\rightarrow E}=  \frac{\tau_{EP} + P_{EP \rightarrow ES} \tau_{ES}}{1-P_{EP \rightarrow ES} P_{ES \rightarrow EP} }, &
\end{align}
As a result, we can re-write the sequestered fraction in Eq.~\ref{eq:retroactivity_intermitent}  solely in terms of properties of single transition steps:
\begin{align}\label{eq:retroactivity_probability}
& \mathcal{R} = 1 - \nonumber \\
& \frac{1}{1 + k_{0}[S]\Big(\frac{\tau_{ES} + P_{ES \rightarrow EP} \tau_{EP}}{1-P_{EP \rightarrow ES} P_{ES \rightarrow EP}}\Big) + k_{1}[P]\Big(\frac{\tau_{EP} + P_{EP \rightarrow ES} \tau_{ES}}{1-P_{EP \rightarrow ES} P_{ES \rightarrow EP} }\Big)}.
\end{align}

For the flux $\Psi$, there are two cases:
\begin{enumerate}
\item Region $I: \Delta G_{EP} < \Delta G_{ES}+  \Delta\mu +\Delta G_c$, ${\rm ES} \rightleftharpoons {\rm EP} $ backward labile,
\begin{align}
\Psi_b = \frac{(1-\mathcal{R})k_0k_1k_{\text{cat}}[E_{\text{tot}}]([S]e^{\Delta\mu + \Delta G_c} - [P])}{k_0k_1e^{\Delta\mu + \Delta G_c + \Delta G_{ES}} + k_0k_{\text{cat}} + k_1k_{\rm cat}e^{\Delta\mu + \Delta G_c}}
\label{eq:flux_case_1}
\end{align}
\item Region $II: \Delta G_{EP} > \Delta G_{ES}+  \Delta\mu +\Delta G_c$, ${\rm ES} \rightleftharpoons {\rm EP} $ forward labile, 
\begin{align}
\Psi_f = \frac{(1-\mathcal{R})k_0k_1k_{\text{cat}}[E_{\text{tot}}]([S]e^{\Delta\mu + \Delta G_c} - [P])}{k_0k_1e^{\Delta G_{EP}} + k_0k_{\text{cat}} + k_1k_{\rm cat}e^{\Delta\mu + \Delta G_c}}
\label{eq:flux_case_2}
\end{align}
\end{enumerate}
The division of $ \Delta G_{ES}- \Delta G_{EP}$ space into these two regions is shown schematically in Fig.~\ref{fig:diffusion_energy_regions}. For the edge case of $\Delta G_{EP} = \Delta G_{ES}+  \Delta\mu +\Delta G_c$ that divides the two regions, both $\Psi_b = \Psi_f$ are valid. We now state a few lemmas that describe the behaviour of retroactivity and flux with respect to changing binding energies in regions $I$ and $II$.

\begin{lemma}\label{lem:region_I_behaviour}
In Region $I$, increasing $\Delta G_{EP}$ decreases $\mathcal{R}$ and increases $\Psi$.
\end{lemma}

\begin{proof}
The transition probabilities $P_{ES \rightarrow EP}$ and $P_{EP \rightarrow ES}$ are both independent of $\Delta G_{EP}$ in region $I$, since $\Delta G_{EP}$ is irrelevant to the transitions out of state ES, and contributes the same factor to all transitions out of EP:
\begin{align}
P_{ES \rightarrow EP} = \frac{k_{\rm cat}}{k_0e^{\Delta G_{ES}}+k_{\rm cat}},&\nonumber \\
P_{EP \rightarrow ES} =   \frac{k_{\text{cat}}e^{-\Delta G_{ES}  -\Delta\mu }}{k_1 + k_{\text{cat}}e^{-\Delta G_{ES}  -\Delta\mu }}.&
\end{align}
The expected lifetime of EP, $\tau_{EP}=\frac{1}{{\rm e}^{\Delta G_{EP}}(k_1 + k_{\text{cat}}e^{-\Delta G_{ES} -\Delta\mu})}$, decreases monotonically with increasing ${\Delta G_{EP}}$, while $\tau_E=\frac{1}{k_0[S]+k_1[P]}$ and $\tau_{ES}=\frac{1}{k_{\text{cat}} + k_0e^{\Delta G_{ES}}}$ remain unchanged. Eq.~\ref{eq:retroactivity_probability} therefore shows that $\mathcal{R}$ must decrease as $\Delta G_{EP}$ increases within region $I$, and Eq.~\ref{eq:flux_case_1}, shows flux $\Psi$ increases with $\Delta G_{EP}$ in region $I$. 
\end{proof}

\begin{lemma}\label{lem:region_II_behaviour}
In Region $II$, increasing $\Delta G_{ES}$ decreases $\mathcal{R}$ and increases $\Psi$.
\end{lemma}

\begin{proof}
The transition probabilities $P_{ES \rightarrow EP}$ and $P_{EP \rightarrow ES}$ are both independent of $\Delta G_{ES}$ in region $II$, since $\Delta G_{ES}$ is irrelevant to the transitions out of state EP, and contributes the same factor to all transitions out of ES:
\begin{align}
P_{EP \rightarrow ES} = \frac{k_{\rm cat}e^{\Delta G_c}}{k_1e^{\Delta G_{EP}}+k_{\rm cat}e^{\Delta G_c}},&\nonumber \\
P_{ES \rightarrow EP} =   \frac{k_{\text{cat}}e^{-\Delta G_{EP} + \Delta\mu + \Delta G_c}}{k_0  + k_{\text{cat}}e^{-\Delta G_{EP}+ \Delta\mu + \Delta G_c}}
\end{align}
The expected lifetime of ES, $\tau_{ES}=\frac{1}{{\rm e}^{\Delta G_{ES}}(k_0 + k_{\text{cat}}e^{-\Delta G_{EP} +\Delta\mu +\Delta G_c})}$, decreases monotonically with increasing ${\Delta G_{ES}}$, while $\tau_E=\frac{1}{k_0[S]+k_1[P]}$ and $\tau_{EP}=\frac{1}{k_{\text{cat}}e^{\Delta G_c} + k_1e^{\Delta G_{EP}}}$ remain unchanged. Eq.~\ref{eq:retroactivity_probability} therefore shows that $\mathcal{R}$ must decrease as $\Delta G_{ES}$ increases within region $I$, and Eq.~\ref{eq:flux_case_2}, shows flux $\Psi$ increases with $\Delta G_{ES}$ in region $II$.
\end{proof}

\begin{theorem}\label{thm:optimal_binding_saturation}
$\Delta G^{\rm opt}_{EP}=\Delta G^{\rm opt}_{ES} + \Delta\mu + \Delta G_c$.
\end{theorem}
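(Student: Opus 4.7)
The plan is to argue by contradiction, using Lemmas \ref{lem:region_I_behaviour} and \ref{lem:region_II_behaviour} as ``strict improvement'' directions that rule out any candidate optimum lying strictly inside either region. Since the boundary between the two regions is precisely the line $\Delta G_{EP} = \Delta G_{ES} + \Delta \mu + \Delta G_c$, this forces the optimum onto that line.

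Concretely, I would first suppose for contradiction that an optimum $(\Delta G^{\rm opt}_{ES}, \Delta G^{\rm opt}_{EP})$ lies strictly inside Region $I$, i.e.\ $\Delta G^{\rm opt}_{EP} < \Delta G^{\rm opt}_{ES} + \Delta\mu + \Delta G_c$. Because this inequality is strict, there is an $\varepsilon > 0$ such that $\Delta G^{\rm opt}_{EP} + \varepsilon$ still satisfies the Region $I$ condition. By Lemma \ref{lem:region_I_behaviour}, the perturbed point $(\Delta G^{\rm opt}_{ES}, \Delta G^{\rm opt}_{EP} + \varepsilon)$ has strictly smaller sequestration and strictly larger flux. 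In particular the flux constraint $\Psi \geq \Psi_0$ is preserved (since $\Psi^{\rm opt} \geq \Psi_0$ already), while $\mathcal{R}$ decreases, contradicting optimality. The symmetric argument using Lemma \ref{lem:region_II_behaviour} rules out any optimum strictly inside Region $II$: if $\Delta G^{\rm opt}_{EP} > \Delta G^{\rm opt}_{ES} + \Delta\mu + \Delta G_c$, one can increase $\Delta G^{\rm opt}_{ES}$ by a small $\varepsilon$ while remaining in Region $II$, again strictly improving $\mathcal{R}$ without violating the flux constraint.

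Having eliminated both open regions, the optimum must lie on the shared boundary $\Delta G^{\rm opt}_{EP} = \Delta G^{\rm opt}_{ES} + \Delta\mu + \Delta G_c$, which is the claim. Implicit here is that $\mathcal{R}$ and $\Psi$ are well-defined at the boundary itself, which follows from the remark in the text that $\Psi_b = \Psi_f$ along $\Delta G_{EP} = \Delta G_{ES} + \Delta\mu + \Delta G_c$, so the two expressions agree and the Markov chain rates are continuous across the seam.

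The only subtlety I would flag is the existence of an optimum and the achievability of $\Psi_0$. The statement restricts to achievable target fluxes, so feasibility is assumed; the strict-improvement argument only needs to rule out strict-interior candidates, which it does locally via the $\varepsilon$-perturbation, so no global compactness argument is required. The main ``substantive'' content has already been absorbed into the two preceding lemmas, so the theorem itself is essentially an envelope observation: the two gradient-like directions supplied by those lemmas point toward the same line from either side, and that line is where they must meet.
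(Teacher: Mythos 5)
Your proposal is correct and follows essentially the same contradiction argument as the paper: use Lemma~\ref{lem:region_I_behaviour} (resp.\ Lemma~\ref{lem:region_II_behaviour}) to strictly improve any candidate optimum lying strictly inside region $I$ (resp.\ $II$), forcing the optimum onto the dividing line. Your added care about choosing $\varepsilon$ small enough to stay within the region, and the remark that $\Psi_b=\Psi_f$ on the boundary, are minor refinements of the same argument.
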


\begin{proof}
For contradiction, assume not. Then, the optimal solution either lies inside region $I$ or region $II$.

\begin{figure*}[htbp]
\centering
\includegraphics[scale=0.08]{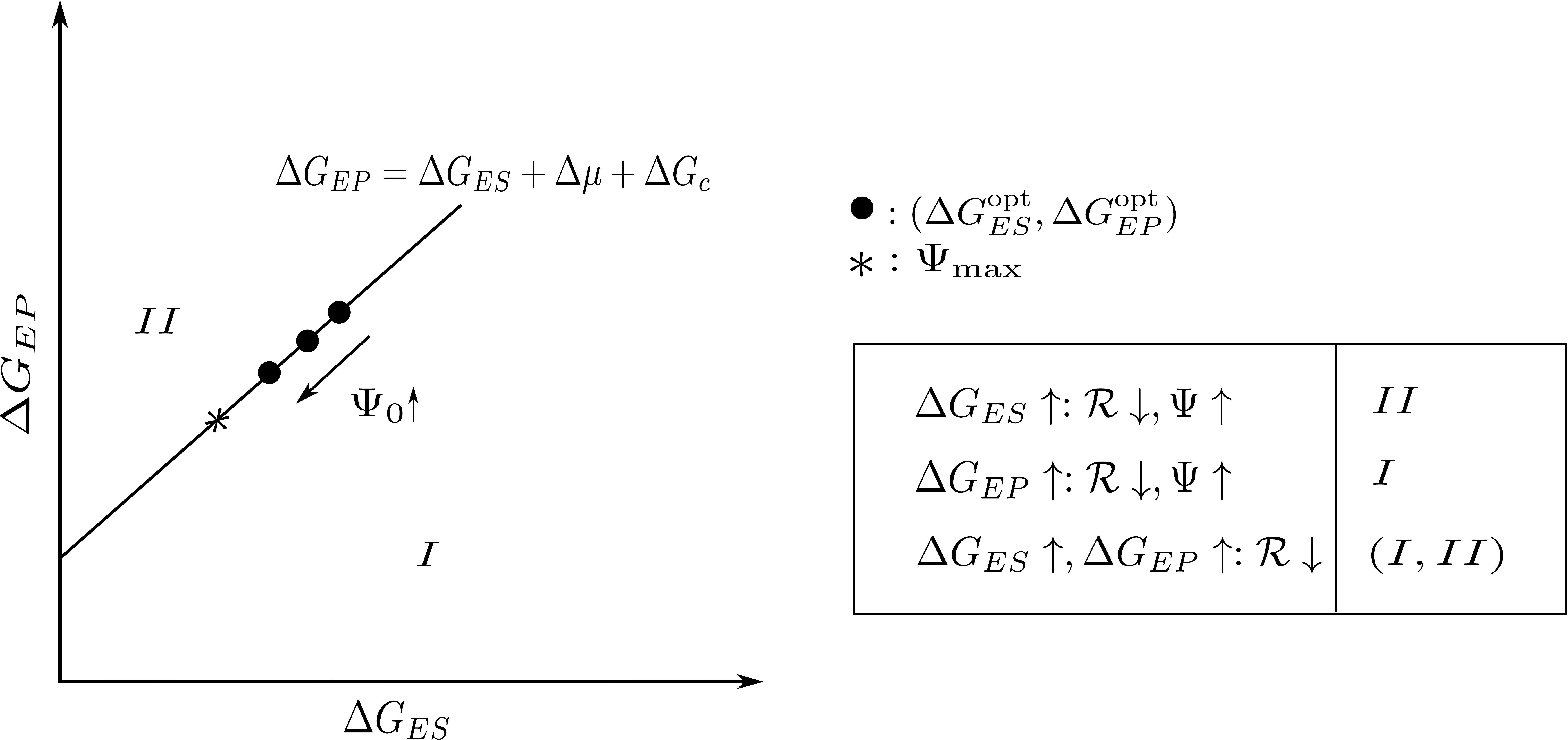}
\captionof{figure}{Graphical illustration of the typical behaviour of the optimisation problem~\ref{pr:original} within the $\Delta G_{ES}$-$\Delta G_{EP}$ plane. Within regions $I$ and $II$, it is always favourable to move towards the separating line $\Delta G^{\rm opt}_{EP} = \Delta G^{\rm opt}_{ES}+ \Delta\mu + \Delta G_c$, as doing so both increases flux $\Psi$ and reduces sequestration $\mathcal{R}$. Throughout the plane, including on the line, increasing both $\Delta G_{ES}$ and $\Delta G_{EP}$ by the same amount reduces retroactivity. These relationships are described in the box to the right of the figure. Both very positive and very negative values of $\Delta G_{ES}$ and $\Delta G_{EP}$ suppress flux leading to a maximal flux $\Psi_{\rm max}$ at moderate values indicated by a $*$ on the plot. Optimal solutions $(G^{\rm opt}_{ES},G^{\rm opt}_{EP})$  are denoted by black dots and are found on the line $\Delta G^{\rm opt}_{EP} = \Delta G^{\rm opt}_{ES}+ \Delta\mu + \Delta G_c$, to the right of the maximal flux, trading off flux against sequestration. The optimal solutions tend towards the cross $*$ as the target flux $\Psi_0$ increases towards $\Psi_{\rm max}$.}
\label{fig:diffusion_energy_regions}
\end{figure*}

Consider region $I$: assume an optimal pair of binding free energies $\left(\Delta G^{\rm opt}_{ES},\Delta G^{\rm opt}_{EP}\right)$, $\Delta G^{\rm opt}_{EP} < \Delta G^{\rm opt}_{ES}+  \Delta\mu +\Delta G_c$, exists that gives a minimal $\mathcal{R} = \mathcal{R}^{\rm opt}$ while satisfying $\Psi \geq \Psi_0$. By Lemma~\ref{lem:region_I_behaviour}, it is always possible to add $\delta G>0$ to $\Delta G^{\rm opt}_{EP}$ and increase $\Psi$ while reducing $\mathcal{R}$. The resulting pair $\left(\Delta G^{\rm opt}_{ES},\Delta G^{\rm opt}_{EP} + \delta G\right)$ necessarily satisfies $\Psi > \Psi_0$, and has a sequestered fraction $\mathcal{R} < \mathcal{R}^{\rm opt}$. The optimality of   $\left(\Delta G^{\rm opt}_{ES},\Delta G^{\rm opt}_{EP}\right)$ is therefore contradicted. 

An exactly analogous argument can be made for optimal solutions in region $II$. Using Lemma~\ref{lem:region_II_behaviour}, it is always possible to increase $\Delta G^{\rm opt}_{ES}$ by $\delta G>0$ and simultaneously reduce the sequestration fraction $\mathcal{R}$ and increase the flux $\Psi$.  Therefore no pair of binding free energies $\left(\Delta G^{\rm opt}_{ES},\Delta G^{\rm opt}_{EP}\right)$ within region $II$ can be optimal. As a result, optimal solutions must lie on the line dividing the regions, $\Delta G^{\rm opt}_{EP}=\Delta G^{\rm opt}_{ES} + \Delta\mu + \Delta G_c$.

\end{proof}

We note in passing that had we modelled the catalytic reaction as uniformly forwards labile (or uniformly backwards labile) throughout the $\Delta G_{ES}-\Delta G_{EP}$ plane, then pathological results would have been obtained. This choice would correspond to setting the whole of the plane to be region $II$ (or region $I$). As a result, it would always be possible to improve the design by increasing $\Delta G_{ES}$ (or $\Delta G_{EP}$), leading to divergent solutions that require unphysical, infinitely-fast transitions between ES and EP.

We now prove in Theorem~\ref{thm:optimal_flux_saturation} that if the flux constraint is achievable by the system, it is saturated: $\Psi_{\rm opt}=\Psi_0$. To do so, we first prove the following lemma

\begin{lemma}\label{lem:retroactivity_decrease_binding_energies}
$\mathcal{R}$ decreases if both binding free energies $\left(\Delta G_{ES},\Delta G_{EP}\right)$ are increased by the same amount $\delta G>0$. 
\end{lemma}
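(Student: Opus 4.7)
The plan is to work directly from Eq.~\ref{eq:retroactivity_probability}, which already expresses $\mathcal{R}$ solely in terms of the single-state lifetimes $\tau_{ES},\tau_{EP}$ and the one-step embedded transition probabilities $P_{ES \rightarrow EP},P_{EP \rightarrow ES}$. Since the map $x \mapsto 1 - 1/(1+x)$ is strictly increasing on $x>0$, and since $k_0[S],k_1[P]>0$, it suffices to show that the common shift $(\Delta G_{ES},\Delta G_{EP}) \mapsto (\Delta G_{ES}+\delta G,\Delta G_{EP}+\delta G)$ strictly decreases each of $\tau_{ES\rightarrow E}$ and $\tau_{EP\rightarrow E}$.

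The first step, and the real content of the argument, is to note that the shift preserves the difference $\Delta G_{EP}-\Delta G_{ES}$. By inspection of Eqs.~\ref{eq:k+cat} and \ref{eq:k-cat}, both $k_{+\rm cat}$ and $k_{-\rm cat}$ depend on $(\Delta G_{ES},\Delta G_{EP})$ only through this difference: the $\min$ in the Metropolis prescription respects the invariance on both of its branches, so no case split between regions $I$ and $II$ is required. Hence both catalytic rate constants are left unchanged by the shift, while the unbinding rates $k_{-0}=k_0 e^{\Delta G_{ES}}$ and $k_{+1}=k_1 e^{\Delta G_{EP}}$ are each multiplied by $e^{\delta G}>1$.

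Plugging this into the elementary identities
\[
\tau_{ES}=\frac{1}{k_{-0}+k_{+\rm cat}}, \quad \tau_{EP}=\frac{1}{k_{+1}+k_{-\rm cat}}, \quad P_{ES \rightarrow EP}=\frac{k_{+\rm cat}}{k_{-0}+k_{+\rm cat}}, \quad P_{EP \rightarrow ES}=\frac{k_{-\rm cat}}{k_{+1}+k_{-\rm cat}},
\]
shows that all four of these quantities strictly decrease under the shift (each has a numerator that is unchanged or fixed and a denominator that strictly increases). Substituting into
\[
\tau_{ES\rightarrow E}=\frac{\tau_{ES}+P_{ES \rightarrow EP}\tau_{EP}}{1-P_{ES \rightarrow EP}P_{EP \rightarrow ES}}, \qquad \tau_{EP\rightarrow E}=\frac{\tau_{EP}+P_{EP \rightarrow ES}\tau_{ES}}{1-P_{ES \rightarrow EP}P_{EP \rightarrow ES}},
\]
each numerator is a sum of non-negative terms that each strictly decrease, while the common denominator lies in $(0,1)$ and strictly increases (because its subtracted product strictly decreases). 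Hence $\tau_{ES\rightarrow E}$ and $\tau_{EP\rightarrow E}$ both strictly decrease, so $k_0[S]\tau_{ES\rightarrow E}+k_1[P]\tau_{EP\rightarrow E}$ strictly decreases, and Eq.~\ref{eq:retroactivity_probability} yields the claimed strict reduction in $\mathcal{R}$.

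I do not anticipate any genuinely hard step; the whole argument is a monotonicity chase. The only subtlety worth stating carefully is the first one: that although the catalytic rates are defined piecewise by a $\min$, they depend on $\Delta G_{ES}$ and $\Delta G_{EP}$ only through their difference, so the argument unifies regions $I$ and $II$ without a case split and avoids any awkward behaviour at the boundary line where the Metropolis branch changes.
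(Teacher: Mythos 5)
Your proposal is correct and follows essentially the same route as the paper's own proof: both observe that a common shift $\delta G$ leaves the catalytic rates unchanged (they depend only on $\Delta G_{EP}-\Delta G_{ES}$) while accelerating the unbinding transitions by $e^{\delta G}$, so that $\tau_{ES}$, $\tau_{EP}$, $P_{ES \rightarrow EP}$ and $P_{EP \rightarrow ES}$ all decrease and Eq.~\ref{eq:retroactivity_probability} forces $\mathcal{R}$ to decrease. You merely make explicit the final monotonicity chase through the first-passage-time formulas, which the paper leaves as ``it immediately follows.''
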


\begin{proof}
First, note that if both binding free energies are increased by the same $\delta G$, the new binding energies are in the same region of the phase plane shown in Fig.~\ref{fig:diffusion_energy_regions} as the old binding energies. Transition rates between ES and EP do not change, since the free energy change between the two binding states is unchanged. Transitions from $EP$ and $ES$ to $E$ are accelerated by $\exp(\delta G)$. As a result of these changes in rates, $\tau_{ES}$, $\tau_{EP}$, $P_{ES \rightarrow EP}$ and $P_{EP \rightarrow ES}$ all necessarily decrease for $\delta G>0$. It immediately follows from Eq.~\ref{eq:retroactivity_probability} that $\mathcal{R}$ decreases if both binding energies $\left(\Delta G_{ES},\Delta G_{EP}\right)$ are increased by $\delta G>0$.
\end{proof}

Lemmas~\ref{lem:region_I_behaviour},~\ref{lem:region_II_behaviour},~\ref{lem:retroactivity_decrease_binding_energies} and Theorem~\ref{thm:optimal_binding_saturation} are summarised in Fig.~\ref{fig:diffusion_energy_regions}. 

\begin{theorem}\label{thm:optimal_flux_saturation}
$\Psi^{\rm opt}=\Psi_0$.
\end{theorem}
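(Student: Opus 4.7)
The plan is to argue by contradiction, using Theorem~\ref{thm:optimal_binding_saturation} to pin the optimum onto the line $\Delta G_{EP} = \Delta G_{ES}+\Delta \mu + \Delta G_c$, then using Lemma~\ref{lem:retroactivity_decrease_binding_energies} to show that any slack in the flux constraint can be exploited by a small shift along the line. Concretely, suppose for contradiction that $\left(\Delta G^{\rm opt}_{ES},\Delta G^{\rm opt}_{EP}\right)$ is optimal, $\mathcal{R}^{\rm opt}$ is the optimal sequestered fraction, and yet $\Psi^{\rm opt} > \Psi_0$. By Theorem~\ref{thm:optimal_binding_saturation} this optimum lies on the separating line, and the perturbed point $\left(\Delta G^{\rm opt}_{ES} + \delta G,\,\Delta G^{\rm opt}_{EP} + \delta G\right)$ also lies on that line for every $\delta G \in \mathbb{R}$.

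Next I would invoke Lemma~\ref{lem:retroactivity_decrease_binding_energies}: shifting both binding free energies by the same $\delta G>0$ strictly decreases $\mathcal{R}$, producing $\mathcal{R}(\delta G) < \mathcal{R}^{\rm opt}$ for every $\delta G > 0$. The remaining step is a continuity argument for the flux. Writing $\Psi$ as the explicit function of $\left(\Delta G_{ES},\Delta G_{EP}\right)$ given (on the line) by either Eq.~\ref{eq:flux_case_1} or Eq.~\ref{eq:flux_case_2}, and using the fact that $1-\mathcal{R}$ is itself continuous in the binding free energies via Eq.~\ref{eq:retroactivity_probability}, $\Psi$ is continuous in $\delta G$. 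Because $\Psi(0) = \Psi^{\rm opt} > \Psi_0$, there exists $\delta G^\ast > 0$ such that $\Psi(\delta G^\ast) \geq \Psi_0$ while simultaneously $\mathcal{R}(\delta G^\ast) < \mathcal{R}^{\rm opt}$, contradicting the optimality of $\left(\Delta G^{\rm opt}_{ES},\Delta G^{\rm opt}_{EP}\right)$. Hence $\Psi^{\rm opt} \leq \Psi_0$, and combined with the feasibility requirement $\Psi^{\rm opt} \geq \Psi_0$ this gives $\Psi^{\rm opt} = \Psi_0$.

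The hard part is really just the continuity step, and it is hard only in a bookkeeping sense rather than a conceptual one: one must check that the flux formula, evaluated on the line where both $\Psi_b$ and $\Psi_f$ agree, depends continuously (in fact smoothly) on the single shift parameter $\delta G$, and that the prefactor $1-\mathcal{R}$ inherited from Eq.~\ref{eq:retroactivity_probability} remains bounded away from zero in a neighbourhood of the optimum. Both facts are immediate from the explicit expressions, so no new machinery is needed beyond Lemmas~\ref{lem:region_I_behaviour}--\ref{lem:retroactivity_decrease_binding_energies} and Theorem~\ref{thm:optimal_binding_saturation}. A minor remark worth including is that the argument requires $\Psi_0$ to be achievable; if $\Psi_0$ equals the maximum achievable flux, saturation is immediate, and if $\Psi_0$ is strictly less than the maximum, the continuity perturbation above applies.
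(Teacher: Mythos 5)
Your contradiction argument reproduces the first half of the paper's proof almost exactly: assume $\Psi^{\rm opt}>\Psi_0$, shift both binding free energies by a common $\delta G>0$ to strictly lower $\mathcal{R}$ (Lemma~\ref{lem:retroactivity_decrease_binding_energies}), and use continuity of $\Psi$ to keep the flux constraint satisfied for small enough $\delta G$. (Your detour through Theorem~\ref{thm:optimal_binding_saturation} to confine the argument to the line $\Delta G_{EP}=\Delta G_{ES}+\Delta\mu+\Delta G_c$ is harmless but unnecessary; Lemma~\ref{lem:retroactivity_decrease_binding_energies} holds throughout the plane, and the paper does not invoke Theorem~\ref{thm:optimal_binding_saturation} here.)

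However, there is a genuine gap: your argument only shows that no \emph{finite} optimal pair with $\Psi^{\rm opt}>\Psi_0$ can exist. It does not rule out the scenario the paper explicitly worries about, namely that the improvement procedure never terminates --- that as $\left(\Delta G_{ES},\Delta G_{EP}\right)\rightarrow\infty$ along a fixed offset, $\mathcal{R}$ decreases forever while $\Psi$ decreases but stays bounded strictly above $\Psi_0$. In that case the infimum of $\mathcal{R}$ over the feasible set is never attained, there is no optimizer, and the claim $\Psi^{\rm opt}=\Psi_0$ is not established. The paper closes this hole by computing the limit of $\Psi$ as $\Delta G_{ES}\rightarrow\infty$ at fixed offset $\Delta G_{EP}=\Delta G_{ES}+\Delta G_{\rm off}$: since $\pi_E\rightarrow 1$, the flux reduces to $k_0[S]P_{ES\rightarrow EP}-k_1[P]P_{EP\rightarrow ES}$, and both transition probabilities decay as ${\rm const}\cdot\exp(-\Delta G_{ES})$, so $\Psi\rightarrow 0$ exponentially. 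Hence the iterated shifts must eventually drive $\Psi$ down to $\Psi_0$, guaranteeing that a finite optimum exists and saturates the constraint. Your remark about achievability gestures in the right direction but does not substitute for this asymptotic estimate, which is the only nontrivial part of the paper's proof that your proposal omits.
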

 
\begin{proof} 
For contradiction, assume an optimal pair $\left(\Delta G^{\rm opt}_{ES},\Delta G^{\rm opt}_{EP}\right)$ with $\Psi^{\rm opt} > \Psi_0$ exists. By Lemma~\ref{lem:retroactivity_decrease_binding_energies}, a pair with $\left(\Delta G^{\rm opt}_{ES}+\delta G,\Delta G^{\rm opt}_{EP} + \delta G\right)$ with lower sequestration $\mathcal{R} < R^{\rm opt}$ can be found for arbitrary $\delta G>0$. Since $\Psi$ is a continuous function, it is always possible to choose a sufficiently small $\delta G$ such that the new system has $\Psi \geq \Psi_0$, contradicting our initial assumption. Therefore an optimal pair with finite $\left(\Delta G^{\rm opt}_{ES},\Delta G^{\rm opt}_{EP}\right)$ and $\Psi^{\rm opt} > \Psi_0$ cannot exist. However, it is still possible that divergent values of the binding free energies lead to an ever decreasing value of $\Psi$ that nonetheless does not tend towards $\Psi_0$. We now argue that $\Psi$ tends exponentially towards zero for sufficiently large $\left(\Delta G_{ES},\Delta G_{EP}\right)$. As a consequence, the procedure of iteratively adding $\delta G>0$ to any candidate pair $\left(\Delta G^{\rm opt}_{ES},\Delta G^{\rm opt}_{EP}\right)$ with $\Psi^{\rm opt} > \Psi_0>0$ is guaranteed to eventually reach an improved solution with reduced $\mathcal{R}$ and $\Psi = \Psi_0$.

In general, the net flux  through the system is 
\begin{align}\label{eq:flux_high_free_energy}
\begin{split}
\Psi &=  k_{+0} \pi_E \times  \text{P}(EP \to E\, \text{occurs before}\, ES \to E) \\ 
   & - k_{-1}\pi_E   \text{P}(ES \to E\, \text{occurs before}\, EP \to E).
\end{split}
\end{align}
This result follows from multiplying the rate of binding transitions in steady state by the probability that those binding transitions actually lead to substrate/product turnover.

In the limit $\Delta G_{ES},\Delta G_{EP} \rightarrow \infty $ at an arbitrary fixed offset, $\Delta G_{EP} = \Delta G_{ES} +\Delta G_{\rm off}$, we have $\pi_E \rightarrow 1$ since both ES and EP states are unstable. 
Moreover, the probability of transitioning between the EP and ES states (rather than to E) is suppressed by a factor  $\exp(-\Delta G_{ES})$. To first order in the small quantity $\exp(-\Delta G_{ES})$, trajectories that visit either EP or  ES more than once before returning to E can therefore be neglected when calculating $\text{P}(EP \to E\, \text{occurs before}\, ES \to E)$ and $ \text{P}(ES \to E\, \text{occurs before}\, EP \to E)$ .

Noting additionally that $\text{rate}(E\to ES)=k_0[S]$ and $\text{rate}(E\to EP) = k_1[P]$, Equation~\ref{eq:flux_high_free_energy} becomes 
\begin{align}
\lim_{\Delta G_{ES} \rightarrow \infty} \Psi = k_0[S] P_{ES \rightarrow EP} - k_1[P] P_{EP \rightarrow ES}.
\end{align}

Moreover, since
\begin{align}
\lim_{\Delta G_{ES} \rightarrow \infty}  P_{ES \rightarrow EP} = \lim_{\Delta G_{ES} \rightarrow \infty}{\rm const} \exp(-\Delta G_{ES}) = 0, & \nonumber \\
\lim_{\Delta G_{ES} \rightarrow \infty}  P_{EP \rightarrow ES} = \lim_{\Delta G_{ES} \rightarrow \infty}{\rm const} \exp(-\Delta G_{ES}) = 0,
\end{align}
$\lim_{\Delta G_{ES} \rightarrow \infty}  {\Psi} = 0$
and the flux constraint is always saturated. \\

\end{proof}

\begin{figure}[htbp]
\centering
\includegraphics[scale=0.7]{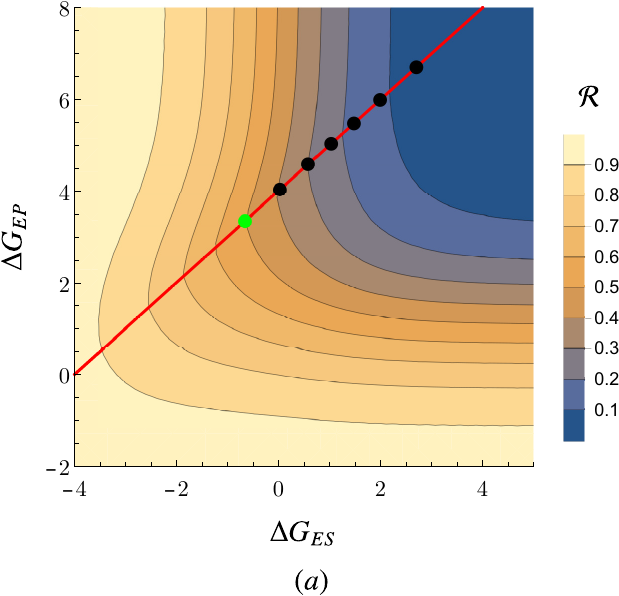}\vspace{3mm}
\includegraphics[scale=0.7]{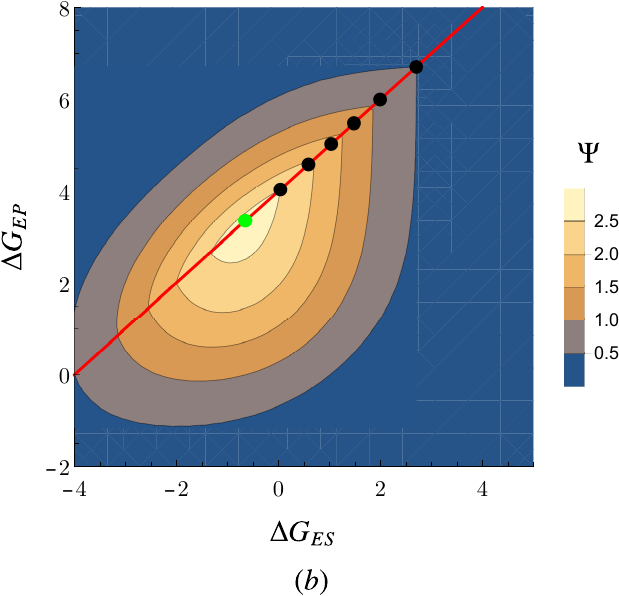}\vspace{5mm}
\includegraphics[scale=0.14]{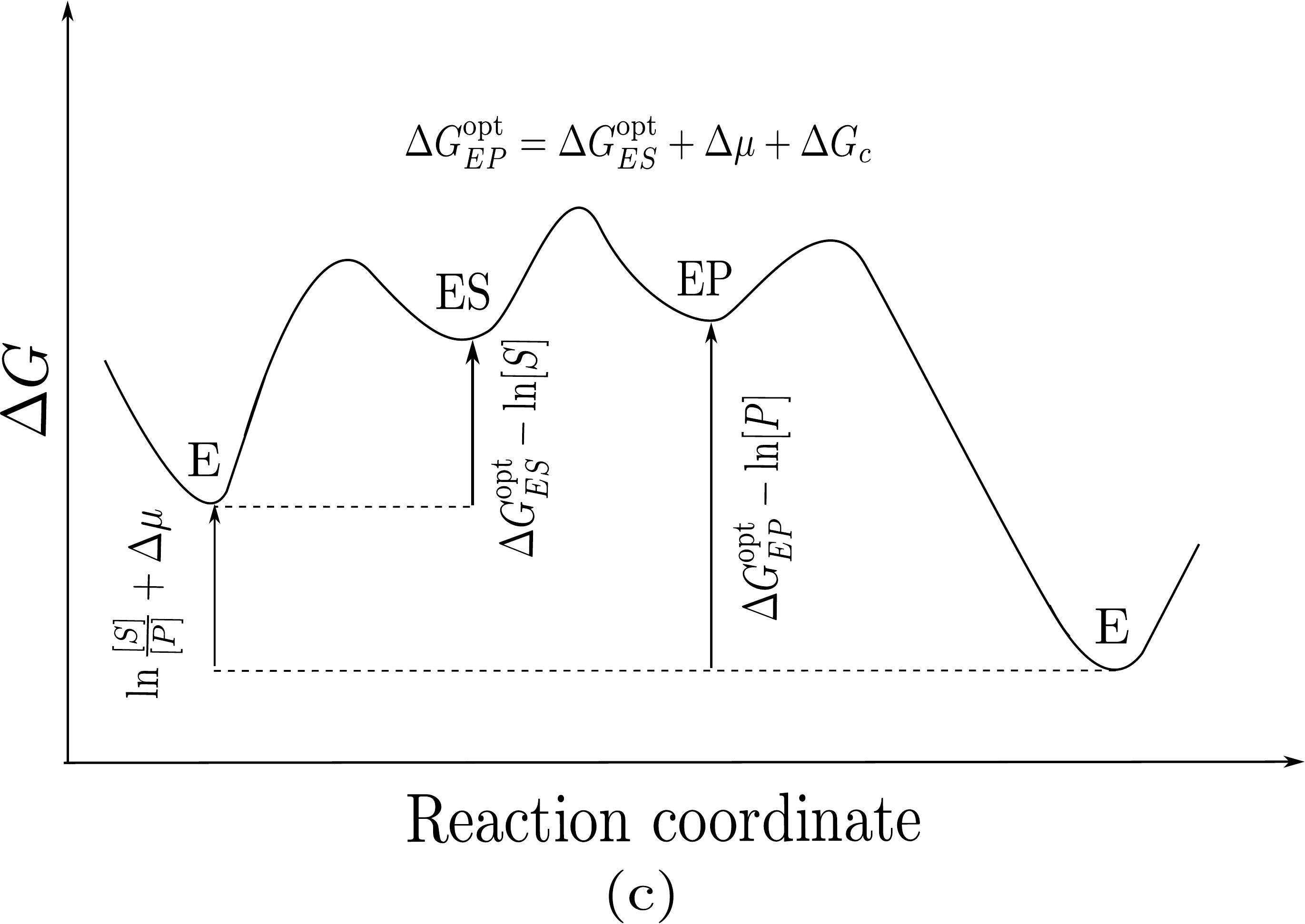}
\captionof{figure}{Numerical calculations with a concrete example, exemplifying the behaviour predicted in Fig.~\ref{fig:diffusion_energy_regions}. (a) Sequestration contour plot as a function of binding free energies. (b) Flux contour as a function of binding fee energies. The black dots in (a) and (b) indicate optimal binding free energies corresponding to certain target fluxes. The green dot is the point corresponding to the maximum flux; all optimal binding free energies lie on the line that corresponds to $\Delta G^{\rm opt}_{EP}=\Delta G^{\rm opt}_{ES}+\Delta\mu + \Delta G_c$ to the right of this point. Parameters used for optimization: $[E_{\rm tot}]=10,[S] = 1,[P] = 3,k_0 = 1,k_1 = 1,k_{\rm cat} = 1,\Delta\mu = 3,\Delta G_c = 1$. Target fluxes $\Psi_0: (0.5,0.9,1.3,1.7,2.1,2.5)$. (c) Illustration of a free energy landscape corresponding to an optimal network with $\Delta G_c \approx 0$. Both ES and EP are moderately high in free energy, and at almost equal height; as a result, the overall free-energy change of reaction $\Delta G= -\ln [S]/[P]-\Delta \mu$ is essentially equal to the difference between the substrate binding free energy ($\Delta G^{\rm opt}_{ES} - \ln[S]$) and the product binding free energy ($\Delta G^{\rm opt}_{EP} - \ln[P]$).}
\label{fig:numerical_result_diffusion}
\end{figure}

\subsubsection{Qualitative physical discussion of diffusion-controlled system}

The general behaviour outlined in Fig.~\ref{fig:diffusion_energy_regions} is exemplified by a specific system in Fig.~\ref{fig:numerical_result_diffusion}, in which we show contour plots of sequestration $\mathcal{R}$ and flux $\Psi$ as a function of $\Delta G_{ES}$ and $\Delta G_{EP}$. Numerical optimization (using the code in 

The flux plot, shown in Fig.~\ref{fig:numerical_result_diffusion}\,b, illustrates the central trade-off inherent to enzymatic operation. Focussing on just the substrate binding free energy $\Delta G_{ES}$, we see that the flux $\Psi$ is non-monotonic, with a peak at moderate values of $\Delta G_{ES}$. If $\Delta G_{ES}$ is too high, $\Psi$ vanishes because ES complexes immediately dissociate before the catalytic reaction can occur. On the other hand, if $\Delta G_{ES}$ is too low, ES complexes are too stable and the reactions proceed to completion extremely slowly. By requiring the system to minimise $\mathcal{R}$, we force the system to the highest possible binding free energies that can sustain the flux, since high free energies tend to reduce binding and therefore sequestration.

This central trade-off is not apparent in the Michaelis-Menten model of Eq.~\ref{eq:MM}. In that model, there is no penalty to the binding free energy of the ES complex being arbitrarily low, because it is assumed that the complex can always be converted to ${\rm E + P}$ quickly. However, in our more complete model, if ES is to be rapidly converted into EP, the EP state must also be low in free energy. If this is the case, however, it will tend to frustrate the subsequent release of the product. As a result, optimal values of  $\Delta G_{ES}$ are moderate.

Beyond this moderation of  $\Delta G^{\rm opt}_{ES}$, we find a linear relationship between $\Delta G^{\rm opt}_{ES}$ and $\Delta G^{\rm opt}_{EP}$: $\Delta G^{\rm opt}_{EP}=\Delta G^{\rm opt}_{ES} + \Delta\mu + \Delta G_c$. In our simple model, this relationship means that reactions can occur arbitrarily fast out of the ES and EP states, avoiding sequestration, without compromising the tendency of the reactions to proceed in the desired direction (${\rm ES \rightarrow EP}$ rather than E, ${\rm EP \rightarrow E}$ rather than ES). At a deeper level, it reflects the fact that there is no point in making the product bind arbitrarily more weakly to the catalyst than the substrate does, or the catalytic step ${\rm ES \rightarrow EP}$ will never occur. Similarly, however, there is no point in driving the ${\rm ES \rightarrow EP}$ reaction forwards using a far more favourable  enzyme-product complex, since this product would never be released. In fact, in the default symmetric case in which the kinetic offset parameter $\Delta G_c=0$ (Fig.~\ref{fig:numerical_result_diffusion}\,c), the difference in standard binding free energies is given by precisely the intrinsic free energy difference of the free product and substrate, $\Delta \mu$. If $\Delta \mu>0$, the released free energy can compensate for a limited increase in $\Delta G^{\rm opt}_{EP}$ relative to $\Delta G^{\rm opt}_{ES}$, allowing a somewhat enhanced rate of product release.

Hitherto, we haven't discussed the free-energy profile for the optimal catalyst as it converts a single substrate into a product in detail. Na\"ively, one might assume that to optimise the rate at which the system moves through its states, the optimal free energies would form a nice ladder of roughly evenly-spaced states. However, in their paper, in which the sole aim was to maximise flux, Brown and Sivak noted that uneven free energy drops could ``compensate for differences in bare rate constants"~\cite{brown2017allocating}. Whilst such an effect is doubtless also present in our system, we also see that the additional need to minimise sequestration leads to the intermediate states being systematically pushed to higher free energies as illustrated in Fig.~\ref{fig:numerical_result_diffusion}\,(c).

\subsection{Non-diffusion controlled binding rates}\label{sec:non_diffusion}

With respect to Equation~\ref{eq:actual_mechanism}, we have assumed hitherto that the binding rate constants $k_{+0},k_{-1}$ are diffusion-controlled and therefore fixed, independent of the binding free energies $\Delta G_{ES}$ and $\Delta G_{EP}$. While this approximation may be reasonable  in many cases, we have effectively assumed that dissociation reactions ${\rm ES} \rightarrow {\rm E+S}$ and ${\rm EP} \rightarrow {\rm E+P}$ and can occur arbitrarily fast if $\Delta G_{ES}$ and $\Delta G_{EP}$ are sufficiently large, which is likely to be unphysical. Fortunately, our solutions predict finite values of $\left(\Delta G^{\rm opt}_{ES},\Delta G^{\rm opt}_{EP}\right)$ for any target flux $\Psi_0$, so our optimal enzymes are not inherently pathological in this fashion. Nonetheless, it is reasonable to consider the possibility that sufficiently large, positive values of $\Delta G_{ES}$ and $\Delta G_{EP}$ cause the assumption of diffusion-controlled binding rates to break down, leading to a ``reaction-limited"  regime in which association rates are suppressed as bonding becomes more and more unfavourable.

\begin{figure*}[htbp]
\centering
\includegraphics[scale=0.35]{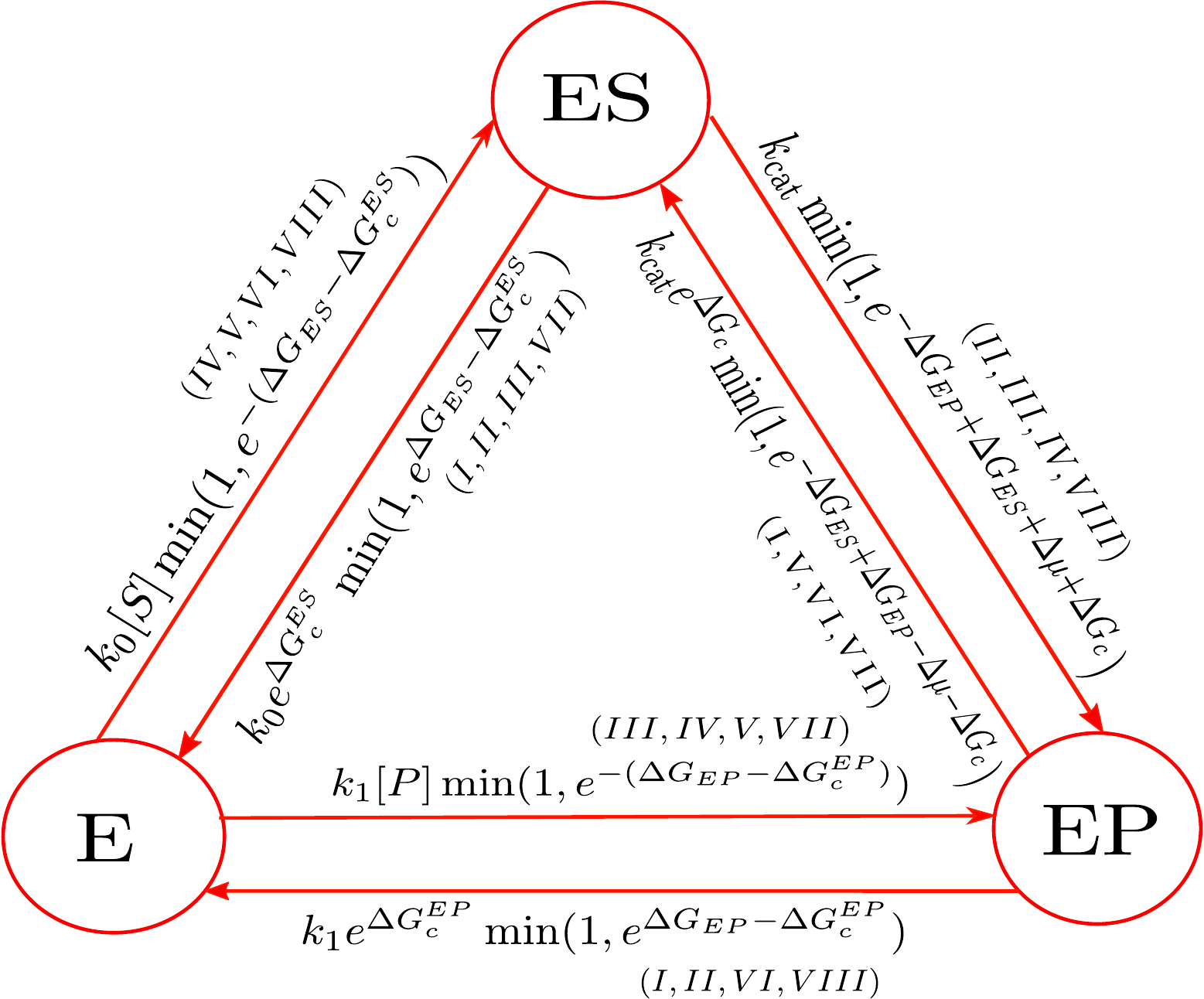}
\captionof{figure}{Markov chain corresponding to Question~\ref{pr:modified} with rate constants governed by Equation~\ref{eq:rates2}. This model incorporates the idea of non-diffusion based binding rates. Within the minimisation statement of a rate constant in the Markov chain, we label one option with a set of regions in the $\Delta G_{ES}-\Delta G_{EP}$ plane in which it applies; the alternative is taken in the other regions. The Markov chain for each of the eight regions can be obtained by applying the appropriate option corresponding to that region in the minimisation statement of the rate constants. 
}
\label{fig:combined_markov_chains}
\end{figure*}

We incorporate this possibility by using the same modified Metropolis dynamics applied to the ${\rm ES \rightleftharpoons EP}$ transition in Section~\ref{sec:diff-controlled}. Specifically, we assume that the ${\rm E+S \rightleftharpoons ES}$ transition is backwards labile (binding is diffusion limited) up until a binding free energy of $\Delta G^{ES}_c$, and forwards labile (reaction limited) above this point. Similarly, we assume that the ${\rm EP \rightleftharpoons E+P}$ transition is forwards labile (binding is diffusion limited) up until a binding free energy of $\Delta G^{EP}_c$, and backwards labile (reaction limited) above this point. Transition rates in the Markov model can then be written

\begin{align}
k_{+0}=k_0[S]\min(1,e^{-(\Delta G_{ES} -\Delta G_c^{ES})}), \nonumber \\
k_{-0}=k_0 e^{\Delta G_c^{ES}}\min(1,e^{\Delta G_{ES}-\Delta G_c^{ES}}), \nonumber \\
k_{+1}=k_1 e^{\Delta G_c^{EP}}\min(1,e^{\Delta G_{EP}-\Delta G_c^{EP}}), \nonumber \\
k_{-1}=k_1[P]\min(1,e^{-(\Delta G_{EP} -\Delta G_c^{EP})}), \nonumber \\
k_{+\rm{cat}}=k_{\text{cat}}\min(1,e^{- \Delta G_{EP}+\Delta G_{ES} + \Delta\mu + \Delta G_c}), & \nonumber\\
k_{-\rm{cat}}=k_{\text{cat}}e^{\Delta G_c}\min(1,e^{-\Delta G_{ES}+ \Delta G_{EP} - \Delta\mu -\Delta G_c}). &
\label{eq:rates2}
\end{align}

The resultant Markov process is illustrated graphically in Fig~\ref{fig:combined_markov_chains}. In this case, there are eight possible combinations of forwards labile and backwards labile options for the three reactions. However, for any particular set of  parameters, only a maximum of seven appear on the $\Delta G_{ES}-\Delta G_{EP}$ plane. This split into seven regions, rather than two as in Section~\ref{sec:diff-controlled}, is illustrated schematically in Figure~\ref{fig:energy_profile}a. We will analyse the resultant mathematical optimization problem in Section~\ref{sec:analysis2}, before turning to its biophysical interpretation in Section~\ref{sec:discussion2}.

\subsubsection{Detailed analysis of non-diffusion-controlled system}\label{sec:analysis2}

Our question in this setting amounts to the following:

\begin{question}\label{pr:modified}
\textit{How should binding free energies $\Delta G_{ES}$ and $\Delta G_{EP}$ in the model of Eq.~\ref{eq:actual_mechanism} be chosen to minimize sequestration $\mathcal{R}$, whilst maintaining a product output rate of $\Psi_0$, given fixed $[S]$, $[P]$ and $\Delta \mu$, non-diffusion-controlled binding reactions, and catalytic steps with fixed and finite upper bounds on their rates given by (Eqs.~\ref{eq:rates2})?} 
\end{question}

\begin{figure*}[htbp]
\centering
\includegraphics[scale=0.087]{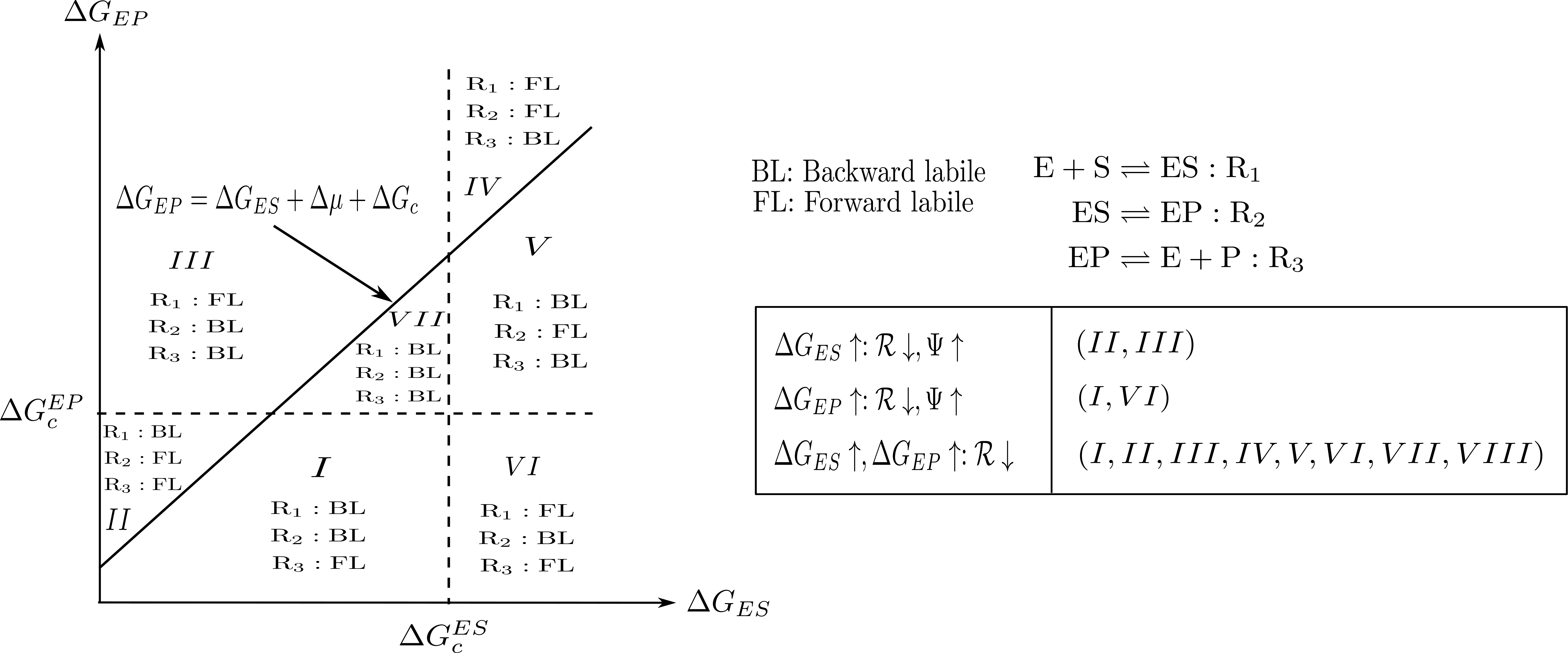}
\captionof{figure}{
Graphical illustration of the optimisation problem~\ref{pr:modified}, in which, at sufficiently high binding free energies, binding is reaction-limited (rather than diffusion-limited). The space of binding energies can be divided into multiple regions in which different reactions are forward and backward labile. If the intersection between $\Delta G_{ES}= \Delta G^{ES}_c$ and $\Delta G_{EP}= \Delta G^{EP}_c$ instead occurs above the line  $\Delta G_{EP}=\Delta G_{ES}+\Delta\mu +\Delta G_c$, region $VII$ would be replaced by region $VIII$, in which all reactions are forwards labile. The table to the right of the figure gives the relationship between retroactivity and flux with respect to the change in binding free energies for various regions in the $\Delta G_{ES}-\Delta G_{EP}$ plane. In regions $II$ and $III$, one can always increase $\Delta G_{ES}$ to get to a state with higher flux and lower sequestration. In regions $I$ and $VI$, one can always increase $\Delta G_{EP}$ to get to a state with higher flux and lower sequestration. Throughout the plane, increasing $\Delta G_{ES}$ and $\Delta G_{EP}$ by the same amount reduces sequestration $\mathcal{R}$; systems with lower target fluxes will therefore be found towards the top right of the graph. Consequently, optimal binding energies lie either on $\Delta G^{\rm opt}_{EP}=\Delta G^{\rm opt}_{ES}+\Delta\mu +\Delta G_c$ or in the regions $IV,V,VII,VIII$.}
\label{fig:energy_profile}
\end{figure*}

The analysis of regions $I$ and $II$ is identical to that in Section \ref{sec:analysis1}. By similar approaches, we deduce the directions on the $\Delta G_{ES}$-$\Delta G_{EP}$ plane that are guaranteed to increase flux and/or decrease sequestration in the different regions shown in Figure~\ref{fig:energy_profile}. 

\begin{itemize}
\item Region $III$: Arguing as we did for Region $II$ during the proof of Lemma~\ref{lem:region_II_behaviour}, one can show that increasing $\Delta G_{ES}$ decreases retroactivity but increases flux in this region.

\item Region $VI$: Employing the same argument that we used for Region $I$ during the proof of Lemma~\ref{lem:region_I_behaviour}, one can show that increasing $\Delta G_{EP}$ decreases retroactivity but increases flux in this region. 

\end{itemize}

We have not found a direction in which $\mathcal{R}$ is guaranteed to decrease, and $\Psi$ guaranteed to increase, for regions $IV$, $V$ $VII$ and $VIII$. As a consequence, we get a weaker result for the augmented system: the optimal binding free energies {\em either} satisfy $\Delta G^{\rm opt}_{EP}=\Delta G^{\rm opt}_{ES}+\Delta\mu +\Delta G_c$ {\em or} lie in regions $IV$, $V$, $VII$ or $VIII$. 

Lemma~\ref{lem:retroactivity_decrease_binding_energies}, however, still holds for the augmented system; if both binding free energies are increased by the same $\delta G >0$, $\mathcal{R}$ necessarily decreases. The result can be separately verified for each region; increasing both binding free energies reduces a non-zero subset of lifetimes $\tau_{ES}$ and $\tau_{EP}$, and transition probabilities $P_{ES \rightarrow EP}$ and $P_{EP \rightarrow ES}$. If adding $\delta G$ to the binding free energies causes the system to move between two regions, the net effects can simply be added together. The result then follows from Eq.~\ref{eq:retroactivity_probability}.

Similarly, we can also show that the flux constraint $\Psi^{\rm opt} \geq \Psi_0$  is saturated for the augmented system. Since, by Lemma~\ref{lem:retroactivity_decrease_binding_energies}, it is always possible to reduce $\mathcal{R}$ by increasing both binding free energies  by the same $\delta G >0$, it only remains to be shown that  $\Psi$ tends exponentially towards zero for sufficiently large $\left(\Delta G_{ES},\Delta G_{EP}\right)$. As a consequence, the procedure of iteratively adding $\delta G>0$ to any candidate pair $\left(\Delta G^{\rm opt}_{ES},\Delta G^{\rm opt}_{EP}\right)$ with $\Psi > \Psi_0>0$ is guaranteed to eventually reach an improved solution with reduced $\mathcal{R}$ and $\Psi = \Psi_0$. 

To perform this analysis, it is necessary to consider regions $IV$ and $V$. Within these regions, consider taking $\Delta G_{ES},\Delta G_{EP} \rightarrow \infty $ at an arbitrary fixed offset,  $\Delta G_{EP} = \Delta G_{ES} +\Delta G_{\rm off}$. Then
\begin{align}
\lim_{\Delta G_{ES} \rightarrow \infty} \Psi = k_0[S] e^{-\Delta G_{ES} +\Delta G_c^{ES}} \frac{P_{ES \rightarrow EP}P_{EP \rightarrow E}}{1-P_{EP \rightarrow ES}P_{ES \rightarrow EP}} & \nonumber \\
-  k_1[P] e^{-\Delta G_{ES} -\Delta G_{\rm off} +\Delta G_c^{ES}} \frac{P_{EP \rightarrow ES}P_{ES \rightarrow E}}{1-P_{ES \rightarrow EP}P_{EP \rightarrow ES}}. 
\end{align}
This result follows from the fact that $\pi_E \rightarrow 1$ as $\Delta G_{ES},\Delta G_{EP} \rightarrow \infty $. Therefore the flux can be calculated as the rate for $E \rightarrow ES$ multiplied by the probability that the transition $EP \rightarrow E$ subsequently occurs before $ES \rightarrow E$, minus the equivalent term for the conversion of $P$ into $S$. 

Within regions $IV$ and $V$, the only effect of increasing $\Delta G_{ES},\Delta G_{EP} $ at a fixed offset is to reduce the rates of the binding transitions by the same factor. All probabilities $P_{i \rightarrow j}$ are unchanged. Therefore
\begin{align}
\lim_{\Delta G_{ES} \rightarrow \infty} \Psi ={\rm const}\, e^{-\Delta G_{ES}}
\end{align}
and the flux necessarily tends exponentially to zero if $\Delta G_{ES},\Delta G_{EP} \rightarrow \infty $ with an arbitrary, fixed offset. We therefore conclude that it is always possible to improve any candidate pair $\left(\Delta G^{\rm opt}_{ES},\Delta G^{\rm opt}_{EP}\right)$ with $\Psi > \Psi_0>0$, and that continued improvements will eventually reach a  solution with reduced $\mathcal{R}$ and $\Psi = \Psi_0$. 

\subsubsection{Qualitative physical discussion of non-diffusion-controlled system}\label{sec:discussion2}

The system with a crossover to reaction-controlled binding kinetics reproduces most of the biophysics observed in the simpler system of Section~\ref{sec:diff-controlled}. Typical examples of $\mathcal{R}$ and $\Psi$ contour plots are given in Fig.~\ref{fig:numerical_result_non_diffusion}, along with points indicating optimal designs for a range of target fluxes $\Psi_0$. The central trade-off that limits $\Delta G_{ES}^{\rm opt}$ to moderate values is still present, and the need to suppress retroactivity pushes binding free energies as high as possible whilst maintaining $\Psi_0$, leading to a free-energy profile that is not uniformly downhill.

\begin{figure*}[htbp]
\centering
\includegraphics[scale=0.7]{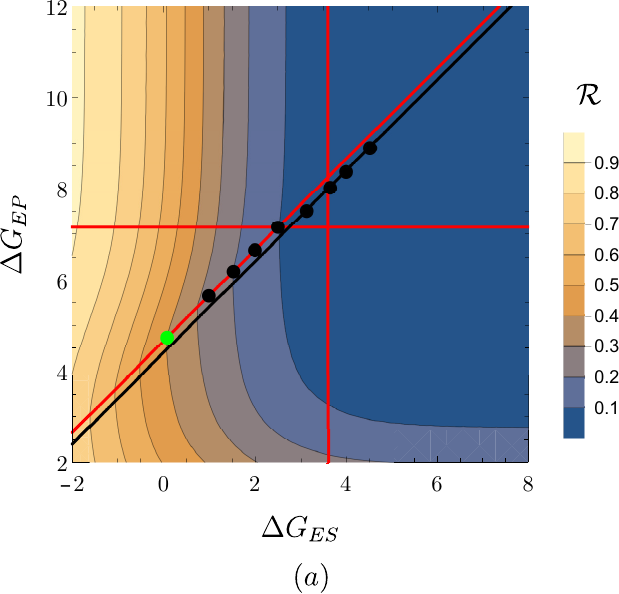}\hspace{9mm}
\includegraphics[scale=0.7]{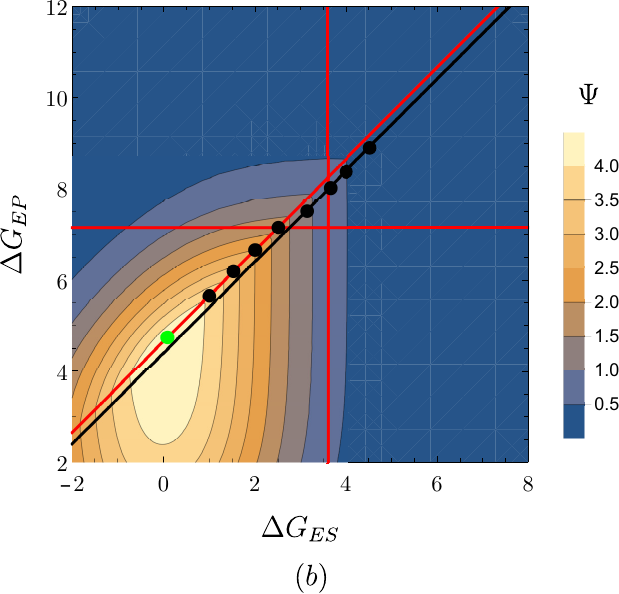}\vspace{10mm}
\includegraphics[scale=0.7]{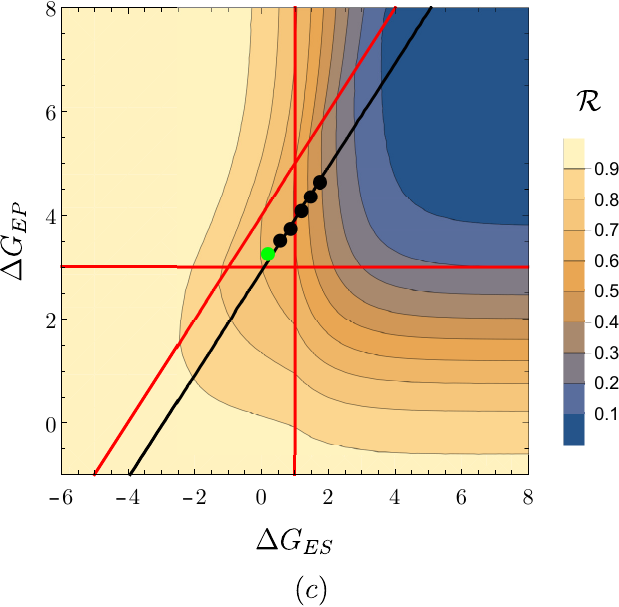}\hspace{9mm}
\includegraphics[scale=0.7]{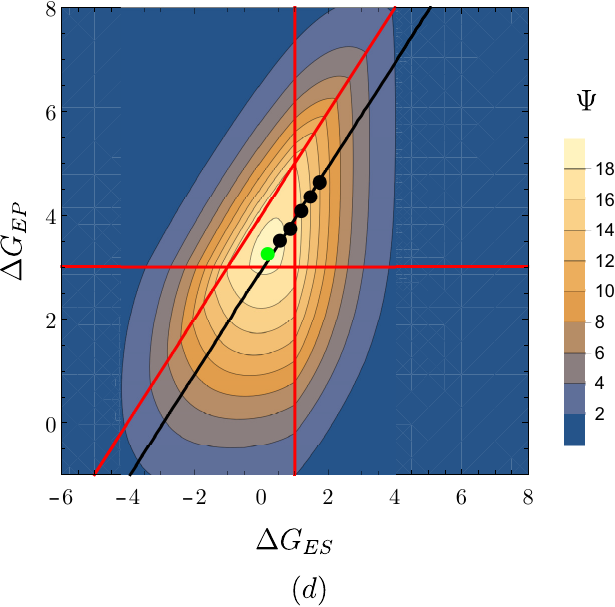}
\captionof{figure}{Optimal solutions for the non-diffusion-controlled system can deviate from the line  $\Delta G^{\rm opt}_{EP}=\Delta G^{\rm opt}_{ES}+\Delta\mu +\Delta G_c$, but they do not deviate that far from this line; they tend towards a parallel line with a constant offset between $\Delta G^{\rm opt}_{EP}$ and $\Delta G^{\rm opt}_{ES}$ black lines with a gradient of unity are drawn as a guide to the eye. We show contour plots for sequestration and flux for two exemplar systems, with optimal binding free energies for different target fluxes $\Psi_0$ shown as points, and the maximal possible flux in the system is illustrated by a green dot. The solid red diagonal line indicates $\Delta G_{EP}=\Delta G_{ES}+\Delta\mu +\Delta G_c$. In the limit of high optimal binding energies, optimal points converge to a line parallel to 
$\Delta G_{EP}=\Delta G_{ES}+\Delta\mu +\Delta G_c$, denoted by a solid balck line in the figure. In (a,b), the peak of flux $\Psi$ is found on the line $\Delta G_{EP}=\Delta G_{ES}+\Delta\mu +\Delta G_c$ between regions $I$ and $II$; some solutions are therefore found on this line to the right of the maximal flux point, but deviation is observed for lower target fluxes which are maximised in regions $V$ and $VII$. In (c,d), the maximal flux is found in region $VII$, and no solutions are found on the line between regions $I$ and $II$. Parameters used for (a,b): $[E_{\rm tot}]=4.87786,[S]=1.61829,[P]=1.76047,k_0=1.62906,k_1=2.80739,k_{\rm cat}=4.61582,\Delta\mu=1.92716, E_1=3.60036,E_2= 7.15195,\Delta G_c=2.72592$. Target fluxes $\Psi_0: (0.3,0.5,0.7,1.1,1.8,2.5,3.2,3.9)$ . Parameters used for (c,d): $[E_{\rm tot}]= 4,[S] = 7,[P] = 5,k_0 = 3,k_1 = 5,k_{\rm cat}=10,\Delta\mu=3,E_1=1,E_2=3,\Delta G_c=1$. Target fluxes $\Psi_0: (12, 14, 16, 17.7, 18.2)$.}
\label{fig:numerical_result_non_diffusion}
\end{figure*}

The major difference is that optimal solutions are no longer constrained to lie exactly on the line $\Delta G^{\rm opt}_{EP}=\Delta G^{\rm opt}_{ES} + \Delta\mu + \Delta G_c$ if the free energies are sufficiently positive. To understand why, consider what happens when we decrease $\Delta G_{ES}$ and move away from the line $\Delta G^{\rm opt}_{EP}=\Delta G^{\rm opt}_{ES} + \Delta\mu + \Delta G_c$ into region $IV$. Generally, decreasing binding free energies increases sequestration. However, unlike in region $II$, such a move can potentially increase the flux $\Psi$, because the rate of substrate binding now increases exponentially. Against this fact, the rate of transition for ${\rm ES \rightarrow EP}$ will also be reduced, tending to reduce $\Psi$. Moving away from the line is therefore now a potentially fruitful way to trade sequestration for flux if the increase in flux due to the binding rate outweighs the negative contribution due to the decrease in the catalytic rate.

Qualitatively, the relative strength of the two contributions depends on whether this reduction in the rate of ${\rm ES \rightarrow EP}$ has a noticeable effect on the probability to proceed to the EP state once bound in the ES state: 
\begin{align}
P_{ES \rightarrow EP} = \frac{k_{\text{cat}}e^{- \Delta G_{EP}+\Delta G_{ES} + \Delta\mu + \Delta G_c}}{k_{\text{cat}}e^{- \Delta G_{EP}+\Delta G_{ES} + \Delta\mu + \Delta G_c}+k_0 e^{\Delta G_c^{ES}}}.
\label{eq:PES->EP}
\end{align}
If the first term in the denominator of Eq.~\ref{eq:PES->EP} is larger the second, then transitions ${\rm ES \rightarrow EP}$  are sufficiently rapid relative to ${\rm ES \rightarrow E}$ that decreasing $\Delta G_{ES}$ by a certain amount affects $P_{ES \rightarrow EP}$, and hence the flux, by much less than the exponential increase in the binding rate. Thus leaving the $\Delta G^{\rm opt}_{EP}=\Delta G^{\rm opt}_{ES} + \Delta\mu + \Delta G_c$ line to enter region $IV$ can be beneficial.

This strategy, however, is clearly limited. A significant decrease in $\Delta G_{ES}$ will quickly suppress the catalytic step, and $P_{ES \rightarrow EP}$ will quickly tend towards $P_{ES \rightarrow EP} \propto \exp^{\Delta G_{ES}}$. At this point, there will be negligible gains in the flux by continuing to reduce $\Delta G_{ES}$, while increases in $\mathcal{R}$ will continue. We thus expect any deviation from the line $\Delta G^{\rm opt}_{EP}=\Delta G^{\rm opt}_{ES} + \Delta\mu + \Delta G_c$ into region $IV$  to be limited.

A similar explanation can be made for optimal solutions in region $V$; in this case, a decrease in $\Delta G_{EP}$ will tend to increase sequestration, but unlike in region $I$ this decrease will tend to slow down unwanted backwards steps ${\rm EP \rightarrow ES}$ without also slowing the release of product from the $EP$ state by the same amount. This tactic is potentially effective in increasing the flux $\Psi$ only while the probability of backwards steps,
 \begin{align}
P_{EP \rightarrow ES} = \frac{k_{\text{cat}}e^{\Delta G_{EP}-\Delta G_{ES} -\Delta\mu}}{k_{\text{cat}}e^{ \Delta G_{EP}-\Delta G_{ES} -\Delta\mu }+k_1 e^{\Delta G_c^{EP}}},
\label{eq:PEP->ES}
\end{align}
remains high. Again, however, the exponential reduction in the rate of backwards steps with $\Delta G_{EP}$ will quickly reduce the incentive to decrease $\Delta G_{EP}$ further, whereas the sequestration of the enzyme by the product will continue to increase. We thus also expect  any deviation from the line $\Delta G^{\rm opt}_{EP}=\Delta G^{\rm opt}_{ES} + \Delta\mu + \Delta G_c$ into region $V$  to be limited.

Despite the breakdown of the tight constraint on $\Delta G^{\rm opt}_{EP}$ and $\Delta G^{\rm opt}_{ES}$, we therefore still expect the free energies to be closely linked. Moreover, this link is due to the physical principles outlined in Section~\ref{sec:diff-controlled}: $\Delta G^{\rm opt}_{ES}$ shouldn't be reduced too far relative to $\Delta G^{\rm opt}_{EP}$, or it will compromise the ${\rm ES \rightarrow EP}$ transition; and $\Delta G^{\rm opt}_{EP}$ shouldn't be reduced too far relative to $\Delta G^{\rm opt}_{ES}$ or any favourable increase in the tendency for  ${\rm ES \rightarrow EP}$  to be unidirectional will be outweighed by increased product binding.  Randomly-generated example systems bear out the intuition of this semi-quantitative analysis (Fig.~\ref{fig:numerical_result_non_diffusion}). Optimal points stay relatively close to the  $\Delta G^{\rm opt}_{EP}=\Delta G^{\rm opt}_{ES} + \Delta\mu + \Delta G_c$ line, and importantly do not appear to continue to move further and further away from it as the target flux $\Psi_0$ is taken to zero. Indeed, optimal pairs $\left(\Delta G^{\rm opt}_{EP},\Delta G^{\rm opt}_{ES}\right)$ tend toward a line parallel to $\Delta G^{\rm opt}_{EP}=\Delta G^{\rm opt}_{ES} + \Delta\mu + \Delta G_c$, as would be predicted if the deviation from the line was constrained by a system-specific limit on the fractions in Eq.~\ref{eq:PES->EP} and \ref{eq:PEP->ES}. 

\section{Conclusions}
We have addressed the question of how catalytic enzymes might be designed or evolved to achieve a target rate of substrate turnover whilst minimizing enzymatic sequestration. This goal maximises the efficiency of metabolic enzymes, and is also helpful in the context of signalling and information-processing tasks. Our results, highlighting key trade-offs in the structure of intermediate enzyme-substrate complexes, should inform the way we rationalize the operation of natural systems and design synthetic analogs. These results are particularly relevant to the design of synthetic information-processing networks -- eg. \cite{Lankinen2020} -- since sequestration-driven retroactivity \cite{tuanase2006signal,ventura2008hidden,del2008modular,barton2013energy,del2015biomolecular,deshpande2017high} disrupts our ability to rationally design complex circuits from simple modular components.

Specifically, we have demonstrated that investigating this question meaningfully requires the use of a more sophisticated model than the standard Michaelis-Menten description of enzymatic kinetics. Using a three-state model, with physically reasonable assumptions on the dependence of reaction rates on binding free energies, we have shown that this challenge centres around the key trade-off in enzymatic kinetics. Namely, binding to substrates should neither be too strong, since the substrate-bound states will act as stable sinks, nor should it be too weak, since the progress of the reaction will be hampered. The optimal binding free energies are therefore moderate.

We also find that the optimal binding free energies of the substrate and the product are closely related to each other. In our simplest description, their difference $\Delta G^{\rm opt}_{ES}-\Delta G^{\rm opt}_{EP}$ is a constant, related to the intrinsic free-energy difference between substrates and products. This design principle runs counter to the intuition that either: (a) $\Delta G^{\rm opt}_{ES}-\Delta G^{\rm opt}_{EP}$ should be extremely negative to favour release of the product; or (b) $\Delta G^{\rm opt}_{ES}-\Delta G^{\rm opt}_{EP}$ should be very positive to favour the enzyme-bound catalytic turnover. In fact, to maximise flux at fixed sequestration, an intermediate value is preferred, with the intrinsic chemical free energy difference between P and S potentially compensating for a slightly negative $\Delta G^{\rm opt}_{ES}-\Delta G^{\rm opt}_{EP}$.

Interestingly, our model predicts that the difference $\Delta G^{\rm opt}_{ES}-\Delta G^{\rm opt}_{EP}$ (unlike the individual optimal binding free energies) is largely insensitive to $[P]$ and $[S]$, and is an intrinsic feature of the enzyme-substrate-product chemistry. It would therefore be a natural candidate for an experimental or bioinformatics investigation exploring the question of which, if any, natural enzymes are optimized to minimize sequestration at fixed flux. It should be noted that even for biological systems that would benefit from reduced sequestration, other constraints are present in a real environment. For example, we have not considered how minimising sequestration might affect the specificity with which an enzyme interacts with one substrate rather than alternatives. 

One way to phrase the above result is that the free energy released by product formation should be exploited to weaken the binding between the product and the catalyst to optimize turnover while minimising sequestration. Although our work has focused on a simple catalytic conversion of a single substrate into a product, this idea is more general. Another important mechanism of cellular information processing is the copying of template polymer sequences, as in RNA transcription and protein translation \cite{crick1970central}. The templates involved must act catalytically to fulfill their biological function \cite{ouldridge_2017_fundamental, poulton2020}, but current attempts to emulate these systems in synthetic contexts have struggled to generate spontaneous detachment of the products \cite{Zhou2019}; the templates effectively become sequestered by the copy polymers formed. Our results here suggest that an effective strategy would be to channel the free energy released by polymerisation into weakening interactions between template and copy, and we have recently presented a DNA reaction motif with exactly the required functionality \cite{CabelloGarcia2020}.

\teo{In Ref.~\cite{brown2017allocating}, Brown and Sivak considered the challenge of maximizing the flux of trajectories through a 3-state system; they did not simultaneously consider minimising the occupancy of the intermediate states, as we do here. They found that the ideal division of free-energy drops was dependent on the detailed functional dependence of the transition rates on free-energy differences.} To meaningfully tackle our optimisation problem, it was necessary not only to make some clear assumptions about these functional forms. We had to go beyond the simple characterisation of reactions as ``forwards labile" and ``backwards labile"~\cite{brown2017allocating}, by introducing limits to the speed of any one reaction --   an approach that is likely to be relevant to a broad class of optimization problems in molecular systems. 

\teo{By introducing these constraints, we found tighter restrictions on the shape of free energy landscapes than Brown and Sivak - in particular, the tendency of the free energy difference between intermediate states to be closely related to the overall free energy change of reaction. Moreover, the challenge of optimizing flux alone would give only the single point of maximal flux highlighted in green in Figs.~\ref{fig:numerical_result_diffusion}, \ref{fig:numerical_result_non_diffusion}. 
Demanding a certain flux with minimal sequestration instead gives a family of solutions, with lower required fluxes leading to higher binding free energies for both enzyme-substrate and enzyme-product complexes. These families of optimal solutions tend to follow lines with gradient 1 in $\Delta G_{\rm ES} - \Delta G_{\rm EP}$-space, illustrating clearly the close relationship between enzyme-substrate and enzyme-product binding free energies in the context.}

\teo{Our model of a catalyst with two discrete, metastable intermediate states is highly simplified, and more sophisticated descriptions with a higher number or even a continuum of intermediate states are possible. More complex descriptions of the dependence of transition rates on system parameters could also be considered, such as a smoothed version of the current sharp transition between forwards and backwards labile as the free energy of reaction is changed. A candidate would be transition rates proportional to $1-\tanh (\Delta G/2)$, where $\Delta G$ is the free energy change assoicated with the transition. Our proofs, which rely on  the specific functional forms of the rates, may not hold exactly in such a smoothed model. However, we expect that smoothed models would produce results that are consistent with the overall biophysics identified here, with the smoothing region making otherwise exact results approximate.} It wold also be interesting to consider time-dependent concentrations of substrate and product (rather than a fixed representative concentration for each). As a general point, however, we have argued at our model is the simplest in which the question of minimizing sequestration at fixed flux has a meaningful answer, and our approach therefore provides insight into the problem in general. Any additional detail is probably best suited to the study of specific systems in which the myriad choices necessary can be justified by reference to the particular biochemistry.

It would perhaps be of greater interest to treat the turnover of  any ancillary fuel molecules, such as ATP, explicitly. In particular, we have ignored the possibility of futile cycles in which fuel molecules are consumed without converting substrate into product. It is possible that such cycles might allow reduced retroactivity at a given flux, at the expense of addition consumption of chemical free energy, analogous to the increases in molecular recognition specificity that can be achieved through ``kinetic proofreading" schemes~\cite{hopfield1974kinetic,ninio1975kinetic,mckeithan1995kinetic}. It would be instructive to compare such a strategy to the effect of simply increasing the intrinsic chemical free energy difference between substrates and products. Previous work~\cite{deshpande2017high,barton2013energy} has shown that fuel consuming approaches can be effective at reducing retroactivity when optimising at the level of the molecular network; we raise the question of whether this strategy is also effective at the level of individual enzyme-substrate-product interactions.

\section{Codes}
The codes for generating the figures are available at 

\section{Acknowledgement}

A.D. acknowledges support from Roth scholarship during his time at the Department of Mathematics in Imperial College London and Van Vleck Visiting Assistant Professorship from the Department of Mathematics at Wisconsin Madison. T.E.O. is supported by a Royal Society University Research Fellowship.

\bibliographystyle{unsrt}
\bibliography{Bibliography}

\end{document}